\documentclass[a4paper,12pt]{article}
\usepackage{latexsym}
\usepackage{graphics} 
\usepackage{epsfig} 
\usepackage{amsmath,amssymb,latexsym,amsfonts,amsthm}
\usepackage{enumerate}
\usepackage[longnamesfirst]{natbib}

\newtheorem{thm}{Theorem}[section]
\newtheorem{prop}[thm]{Proposition}

\newtheorem{remark}[thm]{Remark}

\newtheorem{met}[]{Method}

\newcommand{\bE}{\ensuremath{\mathbf{E}}}
\newcommand{\bP}{\ensuremath{\mathbf{P}}}
\newcommand{\bR}{\ensuremath{\mathbf{R}}}
\newcommand{\bQ}{\ensuremath{\mathbf{Q}}}

\newcommand{\cC}{\ensuremath{\mathcal{C}}}
\newcommand{\cF}{\ensuremath{\mathcal{F}}}

\begin{document}
\title{\LARGE{\bf A Numerical Scheme Based on Semi-Static Hedging Strategy}}

\author{Yuri Imamura\footnote{Corresponding author: yuri.imamura@gmail.com}, Yuta Ishigaki, \\
Takuya Kawagoe and Toshiki Okumura 
}

\maketitle
\thispagestyle{empty}

\abstract{
In the present paper, we introduce a 
numerical scheme for the price of 
a barrier option when the price of the underlying follows a diffusion process. 
The numerical scheme is based 
on an extension of a static hedging formula
of barrier options. For getting the static hedging formula, 
the underlying process needs to have a symmetry. 
We introduce a way to 
``symmetrize" a given diffusion process. 
Then the pricing of a barrier option is reduced to that 
of plain options under the symmetrized process. 
To show how our symmetrization scheme works, 
we will present 
some numerical results applying (path-independent) 
Euler-Maruyama approximation to 
our scheme, comparing them with the path-dependent Euler-Maruyama scheme 
when the model is of the Black-Scholes, 
CEV, Heston, and $ (\lambda) $-SABR, respectively.
The results show the effectiveness of our scheme.  
\\}

\ 

\thanks{{\bf Keywords}: 
Numerical Scheme for the Barrier Options, 
Put-Call Symmetry, Static Hedging, 
Stochastic Volatility Models. 
}

\section{Introduction}
In financial practice, the pricing (and hedging) of 
barrier type derivatives becomes more and more important.  
In the Black-Scholes environment, 
some analytic formulas are available in 
\citet{M}. 

If the underlying process is a diffusion process which is more complicated 
than a Geometric Brownian Motion, 
it will not be able, basically, to rely on anymore an analytic formula. 
Instead one should resort to some numerical analysis. 
There is a problem, however, arising from the 
path-dependence of the pay-off function. 
\citet{G} pointed out that the weak convergence rate against 
the time-discretization gets worse compared with the 
standard path-independent pay-off cases
due to the failure in the observation of 
hitting between two time steps. 
He showed that the weak order 
of Euler-Maruyama approximation 
is $\frac{1}{2}$, 
which is much slower than the standard case where the order is $1$.  

In the present paper, we will introduce a new numerical scheme 
where the pricing (and hedging) of barrier options are reduced to 
that of plain (path-independent) ones. 
The scheme is based on an observation made by \citet{CL} 
which we will refer to as ``arithmetic put-call symmetry" (APCS). 
In the Black-Scholes economy, it is well-recognized that 
the reduction 
is possible due to the reflection principle 
(see \citet{Imamura}). 
The put-call symmetry is an extension of the reflection principle, 
with which a semi-static hedge is still possible. 

There are two keys in our scheme; 
\begin{enumerate}
\item 
For a given diffusion $ X $ and a real number $ K $, we can 
find another diffusion $ \tilde{X} $ which satisfies the 
APCS 
at $ K $ (see section \ref{main}). 
We call this procedure ``symmetrization".
\item 
For $ T> 0 $, 
the expectations $ E [f(X_T)1_{\{\tau >T\} }] $ and 
$ E [f(\tilde{X}_T)1_{\{\tilde{\tau} >T \}}] $
coincides, where $ \tau $ and $ \tilde{\tau}$ are
the first hitting time at $ K $ of $ X $ and $ \tilde{X} $, respectively. 
\end{enumerate}
We do not anymore 
regard the equation for semi-static hedging
but just a relation to calculate the expectation of 
the diffusion with a Dirichlet boundary condition in terms of 
those without boundary conditions. 
In other words, the pricing and hedging 
are reduced to path-independent ones, where many stable 
techniques are available. In this paper, we will present 
some numerical results of applying (path-independent) 
Euler-Maruyama (EM) approximation to 
our scheme, comparing them with the path-dependent EM 
under 
Constant Elasticity of Volatility (CEV) models (\citet{Cox}) 
including as a special case the Black-Scholes (BS) model, 
and stochastic volatility models of Heston's (\citet{nimalin} ) and 
$(\lambda) $-SABR (\citet{Hagan,laurence}). 

This paper consists of two parts. In the first part, 
the discussion of our new scheme is concentrated on one-dimensional 
diffusion models, while the latter part deals with 
applications to the stochastic volatility models. 
Mathematically, the first part is somehow {\em self-contained}, while
one may think 
the latter part to be dependent on the result in \citet{AI}. 
The fact is that we have found, 
in advance of \citet{AI}, 
through numerical experiments 
how it should be applied to 
stochastic volatility models (see \citet{IIO}). 
In anyway, the main aim of the present paper is 
to introduce the new scheme and to report numerical results
which show the effectiveness of the scheme. 
In order to ensure the consistency of the experiments, 
we present detailed descriptions. 

The paper is organized as follows. 
In Section \ref{PCS}, we recall the APCS and 
how it is applied to the pricing and semi-static hedging of barrier options. 
In section \ref{APCS-dif} 
we give a sufficient condition shown by 
\citet{CL} under which APCS holds. 
In Section \ref{main}, 
we introduce a way to 
``symmetrize" a given diffusion process. 
We then show that by using the symmetrized process which satisfies APCS, 
the pricing of barrier option is reduced to that 
of two plain options. 
In section \ref{sec-numerical}, 
we give numerical examples 
under our symmetrized approximation method. 
The results of the path-wise EM scheme (in section \ref{sec-EM}) 
and our new scheme are given when 
the underlying asset price process 
follows CEV with the volatility elasticity $ \beta = 1$
which is nothing but a BS model and other elasticities. 
From Section \ref{sec-svm}, we discuss applications to 
stochastic volatility models. 
The symmetrized method is also applicable to the stochastic volatility models
where the underlying price process and its volatility follows a (degenerate)
$2$-dimensional diffusion process. 
In section \ref{num;sv;single}, 
we give numerical results under Heston model and 
$\lambda$-SABR model. 
In section \ref{sec-double}, we show that the symmetrization scheme 
also works for the 
pricing of double barrier option.  
One will find that our scheme overwhelms 
the path-wise EM in all numerical results. 

\subsubsection*{Acknowledgments}
The authors are grateful to Professors Jir\^o Akahori,  
Arturo Kohatsu-Higa, and Hideyuki Tanaka 
for many valuable comments and 
for careful reading the
manuscript and suggesting several improvements. 

\section{The Put-Call Symmetry Method 
for One Dimensional Diffusions}\label{PCS}
\subsection{Arithmetic Put-Call Symmetry}\label{subsec-APCS}
Let $ X $ be a real valued diffusion process defined on 
a complete filtered probability space
$ (\Omega,\mathcal{F}, \mathbf{P},\{ \mathcal{F}_t\} ) $ 
which satisfies the usual conditions. 
For fixed $K >0$, 
we say that 
{\em arithmetic put-call symmetry} (APCS) at $ K $  holds for $ X $ 
if the following equation is satisfied ; 
\begin{equation*}
\mathbf{E}[G(X_t - K)\bigm| X_0=K]
=\mathbf{E}[ G(K- X_t )\bigm| X_0=K],
\end{equation*}
for 
any bounded measurable function $ G $ and $t \geq 0$. 
The APCS at $K$ is alternatively defined to be 
the equivalence in law between 
$X_t - X_{t \wedge \tau}$ and $X_{t \wedge \tau} - X_t$ for any $t \geq 0$,
where $ \tau $ is the first hitting time of $ X $ to $ K $.  

Intuitively, the APCS means the following. 
For every path of $X$ which crosses the level $K$ and is found at time 
$t$ at a point below $K$, there is a ``shadow path" obtained from 
the reflection with respect to the level $K$ 
which exceeds this level at time $t$, and these two paths 
have the same probability. 
For one-dimensional Brownian motion, APCS holds for any $K>0$ 
since the reflection principle holds. 

In \citet{CL}, the APCS, or more precisely, {\em PCS}\footnote{
They defined PCS as the 
equality of the distribution between $\frac{X_T}{X_0}$ under 
$\bP$ and $\frac{X_0}{X_T}$ under $\bQ$, 
where $\frac{\bQ}{\bP} = \frac{X_T}{X_0}$.},
is applied to the pricing and
{\it semi-static} hedging of barrier options. 
Semi-static hedging means replication of the barrier contract 
by trading European-style claims at no more than two times after inception. 

In more detail, we have the following;
if $ X $ satisfies APCS at $ K$, 
then for any bounded measurable $ f $ and $ T > 0 $, 
\begin{equation*}
\begin{split}
\mathbf{E}[f({X}_T)I_{\{\tau >T\}}]
&=\mathbf{E}[f({X}_T)I_{\{X_T > K,\ \tau >T \}}]\\
&=\mathbf{E}[f({X}_T)I_{\{X_T>K\}}] 
- \mathbf{E}[f({X}_T)I_{\{X_T>K,\ \tau \leq T \}}],\\
\end{split}
\end{equation*}
where 
\begin{equation}\label{tau}
\tau = \inf \{t \geq 0 : X_T \leq K\}.
\end{equation} 
By APCS of $X$, we see that 
\begin{equation*}
\begin{split}
\mathbf{E}[f({X}_T)I_{\{X_T>K,\ \tau \leq T \}}]
&=\bE[\bE [f({X}_T)I_{\{X_T>K\}} | \cF_{\tau}]I_{\{\tau \leq T \}} ]\\
&=\bE[\bE [f(2K-{X}_T)I_{\{X_T<K\}} | \cF_{\tau}]I_{\{\tau \leq T \}} ].
\end{split}
\end{equation*}
Hence we obtain the following equation;
\begin{equation}\label{PCSM}
\begin{split}
\mathbf{E}[f({X}_T)I_{\{\tau >T\}}]
&=\mathbf{E}[f({X}_T)I_{\{X_T>K\}}]\\
&\qquad -\mathbf{E}[f(2K-{X}_T)I_{\{X_T<K\}}].
\end{split}
\end{equation}
Of the equation (\ref{PCSM}), 
the left-hand-side reads the price of a barrier option 
written on $X$, whose pay-off is $f$, 
knocked out at $ K $, and the right-hand-side is the price of
a combination of two plain-vanilla options. 

Here is a description of the hedging strategy of a barrier option 
implied from the right-hand-side of (\ref{PCSM}); 
\begin{itemize}
\item[1.]
Hold a plain-vanilla options whose pay-off is $f(X_T)$ 
if the price at the maturity is less than $K$, and is nothing 
if the price at the maturity is greater than $K$. 
Moreover 
short-sell a plain-vanilla options whose pay-off is 
$f(2K-X_T)$ if the price at the maturity is greater than $K$, 
and is nothing if the price at the maturity is less than $K$. 
\item[2.]
Keep the above position until the price hits the barrier $K$. 
If the price never hits $K$ until the maturity, 
the pay-off is $f(X_T)$. 
\item[3.]
If the price hits $K$, clear both plain-vanilla 
positions at the hitting time. Indeed, 
the value of two options are exactly the same at $\tau$. 
\end{itemize}
\subsection{APCS of diffusion process}\label{APCS-dif}

Let $X$ be a solution to 
the following one-dimensional 
stochastic differential equation (SDE) 
driven by a Brownian motion $W$ in $\bR$, 
\begin{equation}\label{diffusion}
dX_t = \sigma (X_t)dW_t + \mu (X_t)dt. 
\end{equation}
Here we assume the following hypotheses; 
\begin{description}
\item[(H1)]
$\sigma: \bR \rightarrow \bR$ 
and $\mu: \bR \rightarrow \bR$ 
are locally bounded measurable functions 
such that 
the linear growth condition is satisfied, ie, 
for a  constant $C$, $|\sigma(x)| + | \mu (x)| \leq C( 1 + |x|)$ for any $x \in \bR$.
\item[(H2)]
The following condition is satisfied; \\
\begin{equation*}
\sigma (y) \neq 0 
\Longleftrightarrow 
\sigma^{-2} \mbox{ is integrable in a neighborhood of } y.  
\end{equation*}
\end{description}

Then we have the following result on the 
uniqueness of the solution to (\ref{diffusion}); 
\begin{thm}[Theorem $4$, \citet{ES}]\label{Thm-ES}
Under {\bf  (H2)}, there exists a unique (in law) solution 
satisfying SDE (\ref{diffusion}). 
\end{thm}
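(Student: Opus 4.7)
The plan is to follow the time-change strategy of Engelbert and Schmidt: represent every weak solution of (\ref{diffusion}) as an explicit measurable functional of a single Brownian motion, so that uniqueness in law reduces to uniqueness of that representation.

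First I would reduce to the driftless case. On any region where $\sigma$ does not vanish, the drift $\mu$ can be absorbed either by a Girsanov change of measure with density $\exp\bigl(-\int_0^t(\mu/\sigma)(X_s)\,dW_s - \tfrac{1}{2}\int_0^t (\mu/\sigma)^2(X_s)\,ds\bigr)$, which is well defined thanks to the linear growth in {\bf (H1)} together with a standard localization, or by passing to a scale function $s$ satisfying $\tfrac{1}{2}s''\sigma^2 + s'\mu = 0$, so that $Y=s(X)$ solves a driftless SDE $dY_t=\tilde\sigma(Y_t)\,d\tilde W_t$ whose coefficient again satisfies {\bf (H2)}. Uniqueness in law for the driftless equation then transfers back to (\ref{diffusion}).

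For the driftless SDE $dX_t=\sigma(X_t)\,dW_t$, any weak solution is a continuous local martingale with $\langle X\rangle_t=\int_0^t \sigma^2(X_s)\,ds$, so by Dambis--Dubins--Schwarz there exists a Brownian motion $B$ on a possible enlargement of the probability space such that $X_t = X_0 + B_{A_t}$ with $A_t=\langle X\rangle_t$. Setting
\begin{equation*}
T_u = \int_0^u \sigma^{-2}(X_0 + B_r)\,dr,
\end{equation*}
the substitution $r=A_s$ in the defining equation for $A$ yields $T_{A_t}=t$, so $A$ is the right-continuous inverse of $T$---a deterministic functional of the Brownian path. Consequently, the law of $X$ is uniquely determined by the law of $B$ together with $\sigma$ and $X_0$.

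The hardest point, and the one where {\bf (H2)} is indispensable, is the behaviour at the zero set $N=\{y:\sigma(y)=0\}$. One needs $T_u$ to be finite and strictly increasing while $X_0+B$ stays away from $N$, and to equal $+\infty$ as soon as $X_0+B$ reaches $N$, so that $A$ is constant thereafter and $X$ is absorbed. The forward implication in {\bf (H2)}---$\sigma(y)\neq 0$ gives $\sigma^{-2}$ locally integrable near $y$---yields the required finiteness via the Brownian occupation-time formula. The converse implication rules out the pathological situation in which $\sigma$ vanishes at $y$ while $\sigma^{-2}$ remains integrable near $y$; without this, a candidate solution could ``pass through'' $y$ in more than one way and uniqueness in law would fail. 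Verifying both effects, and then checking that the constructed $X_0+B_{A}$ is itself a weak solution, is the technical heart of the argument; the rest is bookkeeping.
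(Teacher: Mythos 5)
The paper does not actually prove this statement: it is imported verbatim as Theorem 4 of Engelbert and Schmidt (1985), so there is no in-paper argument to compare yours against. Judged on its own terms, your sketch correctly reconstructs the classical Engelbert--Schmidt strategy for the \emph{driftless} equation: represent a solution as $X_0+B_{A}$ via Dambis--Dubins--Schwarz, identify $A$ as the right-continuous inverse of $T_u=\int_0^u\sigma^{-2}(X_0+B_r)\,dr$, and conclude uniqueness in law because $A$ is then a functional of $B$ alone; you also correctly locate the role of both implications in \textbf{(H2)} (finiteness of $T$ off the zero set via the occupation-times formula, absorption and the threat to uniqueness on it).

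The genuine gap is your very first step, the reduction to the driftless case, which is precisely where the cited paper (whose subject is ``generalized drift'') does its real work. Under \textbf{(H1)}--\textbf{(H2)} alone neither of your two devices is available near the zero set $N=\{y:\sigma(y)=0\}$: the Girsanov density involves $\mu/\sigma$, which is undefined or uncontrolled where $\sigma$ vanishes, and the scale function requires $\mu/\sigma^2\in L^1_{\mathrm{loc}}$, which \textbf{(H2)} guarantees only away from $N$ (at a point of $N$, $\sigma^{-2}$ is by hypothesis \emph{not} locally integrable there). Indeed, if $\mu(y)\neq 0$ at some $y\in N$, the path absorbed at $y$ is not even a solution, so the interplay of drift and degeneracy must be confronted rather than transformed away; Engelbert and Schmidt do this with local-time techniques and an explicit integrability condition coupling the drift to $\sigma^2$. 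Combined with the fact that you explicitly defer ``the technical heart'' (the occupation-time estimates for $T$, absorption at $N$, and the verification that $X_0+B_{A}$ is a weak solution), the sketch as written outlines the driftless case but does not establish the theorem with drift that is actually being quoted.
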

Moreover, by the linear growth condition {\bf (H1)}, 
the unique (in law) solution will not explode in finite time. 

\citet{CL} 
gave a sufficient condition for a solution of 
(\ref{diffusion}) to satisfy PCS at $0 \in \bR$. 
The following Proposition 
is essentially a 
corollary to Theorem $3.1$ in \citet{CL}. 
\begin{prop}\label{prop-CL}
If the coefficients further satisfy the following conditions;  
\begin{equation}\label{sigma}
\sigma (x) = \varepsilon (x) \sigma (2K-x)\ (x \in \bR\setminus \{K\}),
\end{equation}
for a measurable $\varepsilon : \bR \rightarrow \{-1,1\}$ and
\begin{equation}\label{mu}
\mu(x)=-\mu(2K-x) \ (x \in \bR\setminus \{K\}),
\end{equation}
then APCS at $ K $ holds for $ X $.
\end{prop}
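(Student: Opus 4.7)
The plan is to show directly that under the symmetry conditions, the reflected process $Y_t := 2K - X_t$ has the same law as $X$ when both start at $K$. Once this is established, APCS at $K$ is immediate: for any bounded measurable $G$,
\begin{equation*}
\mathbf{E}[G(X_t - K)\mid X_0 = K] = \mathbf{E}[G(Y_t - K)\mid Y_0 = K] = \mathbf{E}[G(K - X_t)\mid X_0 = K].
\end{equation*}

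To identify the law of $Y$, I would compute the SDE it satisfies. Since $dY_t = -dX_t$ and $X$ solves (\ref{diffusion}), writing $X_s = 2K - Y_s$ gives, using (\ref{sigma}) and (\ref{mu}),
\begin{equation*}
\sigma(X_s) = \varepsilon(X_s)\,\sigma(2K - X_s) = \varepsilon(X_s)\,\sigma(Y_s), \qquad \mu(X_s) = -\mu(Y_s),
\end{equation*}
on $\{X_s \neq K\}$ (extend $\varepsilon$ to $K$ by any value; this does not affect the stochastic integrals). Hence
\begin{equation*}
dY_t \;=\; -\varepsilon(X_t)\,\sigma(Y_t)\,dW_t \;+\; \mu(Y_t)\,dt \;=\; \sigma(Y_t)\,d\tilde W_t + \mu(Y_t)\,dt,
\end{equation*}
where $\tilde W_t := -\int_0^t \varepsilon(X_s)\,dW_s$. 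Since $\varepsilon^2 \equiv 1$, the continuous local martingale $\tilde W$ has quadratic variation $[\tilde W]_t = t$, so by Lévy's characterization it is a Brownian motion. Thus $(Y, \tilde W)$ is a weak solution of (\ref{diffusion}) with initial value $Y_0 = K$.

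Since the coefficients $\sigma$ and $\mu$ are the same as those driving $X$, hypotheses \textbf{(H1)} and \textbf{(H2)} apply, and Theorem \ref{Thm-ES} yields uniqueness in law for (\ref{diffusion}) started from $K$. Therefore $Y$ and $X$ have the same law under $\mathbf{P}(\,\cdot\mid X_0 = K)$, which proves APCS at $K$.

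The main technical obstacle is the point $x=K$, where $\varepsilon$ is not defined and the relations (\ref{sigma})--(\ref{mu}) are not assumed. One must argue that this does not damage the SDE computation: note that (\ref{mu}) at $x \neq K$ together with continuity considerations forces $\mu(K) = 0$ wherever $\mu$ is defined by a limit, and an arbitrary measurable extension of $\varepsilon$ at $K$ is harmless because the stochastic integrand is only being multiplied by $\sigma(Y_s)$, and one can verify (using that $\sigma(K)^2$ and $\mu(K)$ contribute via a set of times of Lebesgue measure zero in the non-trivial cases) that the equalities hold as integral identities. The degenerate case $\sigma(K)=0$ is straightforward since then $\mu(K)=0$ and $X_t \equiv K$, making APCS trivial.
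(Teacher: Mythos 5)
Your proof is correct and is essentially the paper's own argument: both reflect the process, identify the reflected path as a weak solution of the same SDE driven by the new Brownian motion $-\int_0^{\cdot}\varepsilon(X_s)\,dW_s$ (justified by L\'evy's characterization since $\varepsilon^2\equiv 1$), and conclude by the Engelbert--Schmidt uniqueness in law. The only difference is cosmetic --- you reflect the entire path started from $X_0=K$, whereas the paper reflects only after the hitting time $\tau$ for a general starting point, and the two constructions coincide when $X_0=K$ since then $\tau=0$; your added discussion of the exceptional point $x=K$ addresses a detail the paper's proof passes over in silence.
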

\begin{proof}
By the uniqueness in law, 
it is sufficient to show that 
$(X_{t \wedge \tau} - (X_t - X_{t \wedge \tau}) )_{t \geq 0}$ 
solves the SDE (\ref{diffusion}). 
By the assumptions $(\ref{sigma})$ and $(\ref{mu})$, we obtain that 
\begin{eqnarray*}
\begin{split}
X_{t \wedge \tau} - (X_t - X_{t \wedge \tau}) 
&=X_{t \wedge \tau} -\int_{t \wedge \tau}^t
\sigma(X_s)dW_s-\int_{t \wedge \tau}^t\mu(X_s)ds\\
&=X_{t \wedge \tau}
-\int_{t \wedge \tau}^t\varepsilon (X_s) \sigma(2K-X_s)dW_s\\ 
& \qquad 
+\int_{t \wedge \tau}^t \mu (2K-X_s) ds\\
&=X_{t \wedge \tau}
-\int_{t \wedge \tau}^t\varepsilon (X_s)\sigma(X_{\tau}-(X_s- X_{\tau}))dW_s\\ 
& \qquad 
+\int_{t \wedge \tau}^t \mu (X_{\tau}-(X_s- X_{\tau})) ds.
\end{split}
\end{eqnarray*}
We set $W'_t= W_{t\wedge \tau }-
\int_{t\wedge \tau }^t\varepsilon (X_s)d W_s$. 
Since
we obtain that  
\begin{eqnarray*}
\begin{split}
\langle W' \rangle (t) &= \langle W \rangle (t)\\
&=t,
\end{split}
\end{eqnarray*}
$W'$ is a Brownian motion (cf. \citet{IW} Chapter II, Theorem $6.1$.).  
Therefore we see that 
\begin{eqnarray*}
\begin{split}
X_{t \wedge \tau} - (X_t - X_{t \wedge \tau}) 
&=X_{0}+
\int_{0}^t\sigma(X_{s \wedge \tau} - (X_s - X_{s \wedge \tau}))dW_s'\\ 
&\qquad + \int_{0}^t\mu (X_{s \wedge \tau} - (X_s - X_{s \wedge \tau})) ds.
\end{split}
\end{eqnarray*}
Hence APCS at $ K $ holds.
\end{proof}

\subsection{Symmetrization of Diffusion Processes}\label{main}
We introduce a way to ``symmetrize" a given diffusion to satisfy APCS. 
By using this symmetrized process satisfying APCS, 
the pricing of barrier options is reduced to that of plain options. 

We start with a diffusion process $ X $ given 
as a unique solution of SDE $(\ref{diffusion})$. 
We do not assume that the 
coefficients have the symmetric conditions 
(\ref{sigma}) and (\ref{mu}). 
We then construct another diffusion $ \tilde{X} $ that satisfies
APCS at $ K $ in the following way. 
Put 
\begin{equation}\label{sigma2}
\tilde{\sigma} (x) := 
\begin{cases}
\sigma (x) &  x > K \\
\sigma (2K-x) & x \leq K,
\end{cases} 
\end{equation}
\begin{equation}\label{mu2}
\tilde{\mu} (x) := 
\begin{cases}
{\mu} (x) &  x > K \\
-{\mu} (2K-x)  & x \leq K,
\end{cases} 
\end{equation}
and consider the following SDE; 
\begin{equation}\label{tilde}
d\tilde{X}_t=\tilde{\sigma}(\tilde{X}_t)dW_t+\tilde{\mu}(\tilde{X}_t)dt.
\end{equation} 
Again by Theorem \ref{Thm-ES}, 
there is a unique (in law) solution $\tilde{X}_t$. 
Then we obtain the following result;
\begin{thm}\label{Thm}
It holds that 
\begin{equation}\label{new}
\begin{split}
\mathbf{E}[f({X}_T)I_{\{\tau >T\}}] 
&= \mathbf{E}[f(\tilde{X}_T)I_{\{\tilde{X}_T>K\}}]\\
& \qquad-\mathbf{E}[f(2K-\tilde{X}_T)I_{\{\tilde{X}_T < K\}}]. 
\end{split}
\end{equation}
\end{thm}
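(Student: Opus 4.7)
The plan is to deduce Theorem \ref{Thm} in two moves: first, verify that the symmetrized diffusion $\tilde{X}$ satisfies APCS at $K$ and thus falls under formula (\ref{PCSM}); second, show via a law-uniqueness argument that the Dirichlet expectation for $X$ coincides with that for $\tilde{X}$.

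For the first move, I would check directly that $\tilde{\sigma}$ and $\tilde{\mu}$ defined in (\ref{sigma2})--(\ref{mu2}) satisfy the symmetry conditions (\ref{sigma}) and (\ref{mu}) of Proposition \ref{prop-CL} with $\varepsilon \equiv 1$. This is a case check: for $x > K$ one has $2K-x < K$, so $\tilde{\sigma}(2K-x) = \sigma(2K-(2K-x)) = \sigma(x) = \tilde{\sigma}(x)$, and similarly $\tilde{\mu}(2K-x) = -\mu(x) = -\tilde{\mu}(x)$; the case $x < K$ is symmetric. Hence Proposition \ref{prop-CL} yields APCS at $K$ for $\tilde{X}$, and (\ref{PCSM}) applied to $\tilde{X}$ gives
\begin{equation*}
\mathbf{E}[f(\tilde{X}_T)I_{\{\tilde{\tau}>T\}}]
= \mathbf{E}[f(\tilde{X}_T)I_{\{\tilde{X}_T>K\}}]
- \mathbf{E}[f(2K-\tilde{X}_T)I_{\{\tilde{X}_T<K\}}],
\end{equation*}
where $\tilde{\tau}$ is the first hitting time of $K$ by $\tilde{X}$.

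For the second move, I would argue that
\begin{equation*}
\mathbf{E}[f(X_T)I_{\{\tau>T\}}] = \mathbf{E}[f(\tilde{X}_T)I_{\{\tilde{\tau}>T\}}],
\end{equation*}
using the fact that $\tilde{\sigma}$ and $\tilde{\mu}$ coincide with $\sigma$ and $\mu$ on $(K,\infty)$. Taking the barrier-option setting $X_0 > K$, the stopped processes $X^{\tau}$ and $\tilde{X}^{\tilde{\tau}}$ both solve an SDE driven only by coefficients on $[K,\infty)$ with the same initial condition, and by the uniqueness in law from Theorem \ref{Thm-ES} (applied after localizing or, equivalently, after modifying the coefficients outside $[K,\infty)$ without affecting paths up to the first exit) their laws agree. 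Hence the joint law of $(X_T, \tau)$ restricted to $\{\tau>T\}$ matches that of $(\tilde{X}_T, \tilde{\tau})$ restricted to $\{\tilde{\tau}>T\}$, which yields the identity.

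The main obstacle I foresee is the justification of this law-uniqueness step under the mild hypotheses (H1)--(H2): Theorem \ref{Thm-ES} is global, while we want a statement about paths stopped at $K$. The clean way is to observe that any two SDEs whose coefficients agree on the open set $(K,\infty)$ produce the same law on stopped paths $X^{\tau}$, because one can define a common ``stopped martingale problem'' on $(K,\infty)$ whose solution is unique; the integrability hypothesis (H2) is exactly what is needed to invoke Engelbert--Schmidt-type uniqueness in this localized form. Once this coupling is in hand, chaining it with the APCS identity for $\tilde{X}$ derived above completes the proof of (\ref{new}).
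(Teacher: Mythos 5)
Your proposal follows essentially the same two-step route as the paper: first verify that $\tilde{\sigma},\tilde{\mu}$ satisfy (\ref{sigma})--(\ref{mu}) so that Proposition \ref{prop-CL} and formula (\ref{PCSM}) apply to $\tilde{X}$, then identify $\mathbf{E}[f(X_T)I_{\{\tau>T\}}]$ with $\mathbf{E}[f(\tilde{X}_T)I_{\{\tilde{\tau}>T\}}]$ because the coefficients agree above $K$ and the (stopped) laws coincide. If anything, your treatment of the second step is more careful than the paper's, which asserts pathwise equality of the stopped processes directly from uniqueness in law (and contains a typo, saying the coefficients agree for $x<K$ rather than $x>K$), whereas you correctly flag that a localized uniqueness argument on $(K,\infty)$ is what is actually needed.
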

\begin{proof}
Since $\tilde{\sigma}$ and $\tilde{\mu}$ satisfy the condition 
$(\ref{sigma})$ and $(\ref{mu})$, 
APCS at $ K $ holds for $\tilde{X} $  
by Proposition \ref{prop-CL}. 
Then the equation (\ref{PCSM}) is valid for $\tilde{X}$. 
Moreover, by the definition of 
$\tilde{\sigma}$ and $\tilde{\mu}$, 
we have $\sigma(x) = \tilde{\sigma}(x)$ and 
$\mu (x)= \tilde{\mu}(x)$ for $x <K$. Therefore by the uniqueness in law of the
SDE, we have that 
$\{X_t\}_{t \leq \tau}=\{\tilde{X}_t\}_{t \leq \tau}$ pathwisely. 
Then we see that 
$\tau=\tilde{\tau}$ where 
$\tilde{\tau} = \inf \{t >0 : \tilde{X}_t \leq K\}$. 
Hence we have 
\begin{equation*}
\mathbf{E}[f(\tilde{X}_T)I_{\{\tilde{\tau} >T\}}] 
= \mathbf{E}[f(X_T)I_{\{\tau >T\}}]. 
\end{equation*}
\end{proof}
\subsection{Important Remark}
We do not anymore 
regard (\ref{new}) as an equation for semi-static hedging
but a relation to give a \underline{numerical scheme} to calculate the expectation of 
the diffusion with a Dirichlet boundary condition in terms of 
those without boundary conditions. 
The former is very difficult while the latter is rather easier 
using rapidly developing technique from numerical finance. 
In the following sections, we will present some 
results of numerical examples to show the 
effectivity of the new scheme. 

\section{Numerical Experiments for One Dimensional Models}\label{sec-numerical}
\subsection{The Euler-Maruyama Scheme}\label{sec-EM}
Here we briefly recall the Euler-Maruyama scheme for 
a diffusion process given as a solution to SDE $(\ref{diffusion})$. 
Fix $T>0$. 
For $n \geq 1$, we set a subdivision of the 
interval $[0,T]$ 
\begin{equation*}
0 = t_0 \leq t_1 \leq \cdots \leq t_{n} =T, 
\end{equation*}
where $t_k := \frac{kT}{n}$ for $0 \leq k \leq n$, 
and we denote this net by $\bigtriangleup_n$. 

The Euler-Maruyama scheme is a general method for numerically solving 
(\ref{diffusion}) 
by a discretized stochastic process which is given by
\begin{equation}\label{ES}
X^n_{t_{k+1}} 
= X^n_{t_k} + \sigma(X^n_{t_k})(t_{k+1}-t_k) +\mu(X^n_{t_k})
(W_{t_{k+1}} - W_{t_k}), 
\end{equation}
$ k=0,1,2,\cdots,n-1 $, 
and for $t_k < t < t_{k+1} $, $ X_t $ is given by
an interpolation. 
The approximating process $(X^n_T)$ is simulated by using 
independent quasi-random Gaussian variables for the increments 
$(W_{t_{k+1}} - W_{t_k})_{0 \leq k \leq n-1}$. 

We rely on the following result; 
\begin{thm}[Theorem $3.1$, \citet{Y}]\label{thm-Y}
If the set of discontinuous points of $\sigma$ and $\mu$ is 
countable, then the Euler scheme (\ref{ES}) converges weakly to the 
unique weak solution of SDE (\ref{diffusion}) as $n \rightarrow \infty$. 
\end{thm}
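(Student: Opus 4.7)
The plan is a standard weak-convergence argument with two ingredients: first establish tightness of the laws of $(X^n)_n$ on $C([0,T],\bR)$, then identify every subsequential weak limit as a weak solution of the SDE (\ref{diffusion}). Once identification is in hand, the uniqueness in law provided by Theorem \ref{Thm-ES} forces all subsequential limits to coincide, so the full sequence converges weakly to the unique weak solution.

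Tightness should come from standard moment estimates. Rewriting the Euler recursion (\ref{ES}) as a continuous-time equation with piecewise-frozen coefficients, and invoking the linear growth in \textbf{(H1)} together with a discrete Gronwall argument and Burkholder-Davis-Gundy, I would establish $\sup_n \bE[\sup_{t \leq T} |X^n_t|^p] < \infty$ for every $p \geq 2$ and a fourth-moment increment bound of the form $\bE|X^n_t - X^n_s|^4 \leq C|t-s|^2$, from which Kolmogorov's criterion yields relative compactness on $C([0,T],\bR)$.

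For identification, extract a convergent subsequence $X^{n_k} \Rightarrow X^\infty$ and, via Skorokhod representation, work on an enlarged probability space on which the convergence is almost sure and uniform and a driving Brownian motion $W$ is available. Setting $\eta_n(s) := t_k$ for $s \in [t_k,t_{k+1})$ one has
\begin{equation*}
X^n_t = X_0 + \int_0^t \sigma(X^n_{\eta_n(s)})\,dW_s + \int_0^t \mu(X^n_{\eta_n(s)})\,ds.
\end{equation*}
Uniform convergence of $X^n$ with vanishing grid-step oscillation gives $X^n_{\eta_n(s)} \to X^\infty_s$ pointwise in $s$, and continuity of $\sigma,\mu$ outside the countable discontinuity set $D$ yields pointwise convergences $\sigma(X^n_{\eta_n(s)}) \to \sigma(X^\infty_s)$ and $\mu(X^n_{\eta_n(s)}) \to \mu(X^\infty_s)$ at every $s$ with $X^\infty_s \notin D$. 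Dominated convergence (using linear growth together with the uniform moment bounds) then transports these convergences under the drift integral and, via the usual $L^2$-criterion, under the Itô integral, provided the random set $\{s \in [0,T] : X^\infty_s \in D\}$ has Lebesgue measure zero almost surely.

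Verifying this occupation-time condition is the step I expect to be the main obstacle and the real technical content. I would split $D$ into two pieces. For $y \in D$ with $\sigma(y) \neq 0$, condition \textbf{(H2)} gives local integrability of $\sigma^{-2}$ near $y$, and the occupation-times formula for continuous semimartingales forces $\{s : X^\infty_s = y\}$ to have Lebesgue measure zero almost surely. For the remaining points, where $\sigma$ vanishes, \textbf{(H2)} characterizes them as points near which $\sigma^{-2}$ fails to be locally integrable; an Engelbert-Schmidt scale/time-change analysis then shows such points are either unreachable or absorbing, and in either case contribute no positive occupation in the interior of a regular interval. With the occupation-time bound secured, $X^\infty$ is identified as a weak solution of (\ref{diffusion}), Theorem \ref{Thm-ES} pins down its law, and weak convergence of the whole sequence follows.
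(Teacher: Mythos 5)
There is nothing in the paper to compare against here: the statement is imported verbatim as Theorem 3.1 of \citet{Y} and used as a black box, so the paper contains no proof of it. Your skeleton --- tightness from linear-growth moment bounds, identification of subsequential limits as weak solutions, then uniqueness in law (Theorem \ref{Thm-ES}) to upgrade subsequential convergence to convergence of the full sequence --- is the standard route and is essentially the one Yan follows. You have also correctly located the real technical content in the occupation-time condition $\mathrm{Leb}\{s : X^\infty_s \in D\} = 0$.

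Two steps as written would not survive scrutiny, though. First, the occupation-time argument is circular in the order you present it: to run the occupation-times formula with density $\sigma^2(X^\infty_s)\,ds$ you must already know that $\langle X^\infty\rangle_t = \int_0^t \sigma^2(X^\infty_s)\,ds$, which is precisely what the identification step is trying to establish. The fix is to first show that $X^\infty$ is a continuous semimartingale whose bracket is the limit of $\int_0^{\cdot}\sigma^2(X^n_{\eta_n(s)})\,ds$ (obtainable from the discrete quadratic variations before the integrands are identified), and only then use its local time to control the occupation of $D$; the order of operations matters. Second, the claim that zeros of $\sigma$ lying in $D$ ``contribute no positive occupation'' because they are unreachable or absorbing is false in the attainable-and-absorbing case: an absorbed path occupies the absorbing point on a time set of positive Lebesgue measure, so the null-occupation bound simply fails there. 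What rescues the identification at such points is not a null-occupation argument but a separate one showing that on the absorbed stretch both sides of the integral equation are constant, and this is exactly where the interplay between \textbf{(H2)} and the drift, as analyzed in \citet{ES}, genuinely enters. A smaller gloss: passing to the limit in the It\^o integral requires joint convergence of $(X^n, W)$ and a Kurtz--Protter-type stability result (or a reformulation as a martingale problem); Skorokhod representation of $X^n$ alone does not hand you the driving Brownian motion in the limit.
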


From now on, in addition to {\bf (H1)} and {\bf (H2)}, we assume that 
$\sigma$ and $\mu$ are piece-wise continuous. 
\subsubsection{Path-wise Method}
Since the convergence is 
in the space of probability measures on continuous functions, 
we see that this algorithm 
can also be used to simulate a path-dependent 
functional of the process; 
in particular, 
$f(X_T) I_{\{\tau >T\}}$, where $ f $ is a (bounded) 
continuous function and $\tau$ is the first hitting time 
defined by $(\ref{tau})$. 
The functional is approximated by 
$f(X_T^n)I_{\{\tau^n > T\}}$, where 
$\tau^n:= \inf\{t_k: X_{t_k}^n \leq K\}$ is 
the discretized first hitting time to $K$. 
Then the expectation  
$\bE[f(X_T)I_{\{\tau >T \}}]$ 
is approximated with a Monte-Carlo algorithm by 
\begin{met}
(Path-wise EM scheme)\\
\begin{equation}\label{Path-wise}
\frac{1}{M} \sum_{i=1}^M 
f(X_T^n(\omega_i)) 
I_{\{\tau^n (\omega_i)>T\}}.
\end{equation}
\end{met}
By the strong law of large numbers, (\ref{Path-wise}) 
converges to $\bE[f(X_T^n)I_{\{\tau^n >T \}}]$ 
as $M$ goes to infinity.  
Moreover, as the index $n$ of the net $\bigtriangleup_n$ goes to infinity, 
$\bE[f(X_t^n)I_{\{\tau^n >T \}}]$ converges to 
$\bE[f(X_T) I_{\{\tau >T\}}]$. s
According to \citet{G}, the following convergence rate was given;
\begin{thm}[Theorem $2.3$, \citet{G}]
Assume that 
$\sigma$ and $\mu$ are in $C_b^{\infty}$, $\sigma$ is bounded below from zero
 and a solution is non-explosion. 
Then for a bounded measurable function $f$ such that $ d ({\rm supp}f, K) >0$, 
there is a constant $C$ such that 
\begin{equation*}
|\bE[f(X_t^n)I_{\{\tau^n >T \}}] - \bE[f(X_T) I_{\{\tau >T\}}]| 
< C \frac{1}{\sqrt{n}}.
\end{equation*}
\end{thm}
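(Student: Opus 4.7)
The plan is to reduce the weak error to a PDE estimate. Let $u(t,x) := \mathbf{E}[f(X_T) I_{\{\tau > T\}} \mid X_t = x]$ for $x > K$ and $t \in [0,T]$; under the assumptions of the theorem ($\sigma, \mu \in C_b^{\infty}$, $\sigma$ uniformly elliptic, and $f$ bounded with $d(\mathrm{supp}\, f, K) > 0$), $u$ solves the parabolic Dirichlet problem $\partial_t u + \mathcal{L} u = 0$ on $(0,T) \times (K,\infty)$ with terminal data $f$ and vanishing boundary data at $K$. Classical Schauder theory gives $u$ smooth in the interior with $\partial_x u$ bounded up to the boundary and derivative estimates of the form $|\partial_x^k u(t,x)| \leq C(x-K)^{1-k}$ for $k \geq 2$, uniformly in $t$ bounded away from $T$.

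First I would write the weak error as a telescoping sum
\[
\mathbf{E}[f(X_T^n) I_{\{\tau^n > T\}}] - u(0, X_0) = \sum_{k=0}^{n-1} \mathbf{E}\bigl[\bigl(u(t_{k+1}, X^n_{t_{k+1}}) - u(t_k, X^n_{t_k})\bigr) I_{\{\tau^n > t_{k+1}\}}\bigr] + R_n,
\]
where $R_n$ collects the corrections coming from the jumps of $I_{\{\tau^n > t_{k+1}\}}$. On each interval $[t_k, t_{k+1}]$, apply It\^o's formula to $u(\cdot, X^n_\cdot)$, use the PDE to cancel $\partial_t u + \mu(X^n_{t_k})\partial_x u + \tfrac12 \sigma^2(X^n_{t_k})\partial_x^2 u$, and reduce the $k$-th summand to an integral of $(\sigma^2(X^n_s) - \sigma^2(X^n_{t_k}))\partial_x^2 u$ and $(\mu(X^n_s) - \mu(X^n_{t_k}))\partial_x u$ over $[t_k, t_{k+1}]$. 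Summed naively, these increments yield $O(1)$; the saving comes from the survival indicator, which localizes the relevant paths to the region $X^n_{t_k} \approx K$.

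The heart of the argument is a near-barrier analysis using sharp Gaussian estimates on the transition density of the killed process. The joint law of $(X^n_{t_{k+1}}, \min_{[t_k, t_{k+1}]} X^n)$ admits an explicit Brownian-bridge representation, which gives the probability of an undetected crossing conditionally on $(X^n_{t_k}, X^n_{t_{k+1}}) = (x,y)$ as $\exp(-2(x-K)(y-K)/(\sigma^2(x)\, T/n))$. Integrating against the law of $X^n_{t_k}$ shows that the contribution from the slab $\{K < x < K + \sqrt{T/n}\}$ is of order $n^{-3/2}$ per time step, while the range $x - K \gg \sqrt{T/n}$ contributes exponentially little. Summing over the $n$ steps yields the claimed bound $C/\sqrt{n}$; the correction term $R_n$ is controlled by the same hitting-law estimate.

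The main obstacle is precisely the boundary blow-up of $\partial_x^2 u$: a direct $O(1/n)$ bound per step fails, and one has to combine this blow-up with the $(x-K)$-type weight provided by the survival probability so that the relevant kernel is integrable and the summation in $k$ returns exactly the fractional order $1/\sqrt{n}$. This interplay is the source of the non-standard $\tfrac{1}{2}$ convergence rate. No new tool is needed beyond uniform Gaussian heat-kernel bounds for the killed diffusion and the explicit Brownian-bridge hitting law, but the bookkeeping required to track the near-barrier singularity against the localization weight is what makes the proof technical.
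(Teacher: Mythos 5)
This statement is not proved in the paper at all: it is imported verbatim as Theorem~2.3 of \citet{G}, so there is no in-paper argument to compare your sketch against, and I can only measure it against Gobet's actual proof. Your overall architecture --- represent the price as the solution $u$ of the parabolic Dirichlet problem, telescope the weak error along the grid, expand each increment by It\^o's formula, and control crossings between grid times by the Brownian-bridge law $\exp(-2(x-K)(y-K)/(\sigma^{2}(x)T/n))$ --- is the right family of ideas and is broadly how Gobet proceeds. But the step you call the ``heart of the argument'' rests on a false premise. Since $f$ is supported at positive distance from $K$ and the lateral boundary datum is zero, parabolic regularity up to the flat boundary (or, in the constant-coefficient case, the explicit reflection formula) gives that $u$ and \emph{all} its spatial derivatives are bounded on $[0,T]\times[K,K+\delta/2]$; there is no blow-up $|\partial_x^k u|\le C(x-K)^{1-k}$, and with $\partial_x^2u$ bounded the coefficient-increment terms $(\sigma^2(X^n_s)-\sigma^2(X^n_{t_k}))\partial_x^2u$ you isolate sum, after the usual conditioning, to the classical $O(1/n)$. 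They cannot produce the fractional rate.

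The actual source of the $n^{-1/2}$ is the term you dismiss into the remainder $R_n$: the mismatch between killing at grid times and killing continuously, i.e.\ the paths that dip below $K$ and return inside a single step. Your Brownian-bridge computation is the right tool for exactly that term, but your claimed per-step bound of order $n^{-3/2}$ on the slab $\{K<x<K+\sqrt{T/n}\}$ does not follow from integrating ``against the law of $X^n_{t_k}$'' with a merely bounded density: bounded density gives slab mass $O(n^{-1/2})$, the bridge-crossing probability is $O(1)$ there, and $u$ contributes only $O(n^{-1/2})$ via $u(t,K)=0$, which yields $O(1/n)$ per step and hence $O(1)$ in total --- no convergence. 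To reach $n^{-3/2}$ per step you need the additional fact that the law of the \emph{surviving} Euler scheme puts mass only $O(1/n)$ on that slab (linear vanishing of the killed density at the barrier), which is one of the genuinely hard estimates in \citet{G} and is absent from your sketch. So the proposal correctly names the ingredients but misattributes the dominant error term and omits the near-barrier density estimate on which the whole rate hinges.
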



\subsubsection{Put-Call Symmetry Method}\label{sec-PCSM}
Let $\tilde{X}$ be a solution with coefficients $\tilde{\sigma}$ and 
$\tilde{\mu}$ given by (\ref{sigma2}) and (\ref{mu2}), and 
$(\tilde{X}_t^n)$ be the discretized Euler-Maruyama process 
with respect to the net $\bigtriangleup_n$.
Namely, 
\begin{equation*}
\begin{split}
\tilde{X}^n_{t_{k+1}} 
&=\tilde{X}^n_{t_k} +( \sigma(\tilde{X}^n_{t_k})(t_{k+1}-t_k) 
+\mu(\tilde{X}^n_{t_k})(W_{t_{k+1}} - W_{t_k}))I_{\{\tilde{X}^n_{t_k}> K\}}\\
& \quad
+(\sigma(2K-\tilde{X}^n_{t_k})(t-t_k) 
-\mu(2K-\tilde{X}^n_{t_k})(W_t - W_{t_k}))I_{\{\tilde{X}^n_{t_k} \leq K\}}
\end{split}\end{equation*}
for $k=0,1,2,\cdots,n-1$. 
With an interpolation, $\tilde{X}_t^n$ for  
$t_k \leq t \leq t_{k+1}$ is obtained as well. 
Since the set of the discontinuous in the coefficients is 
Lebesgue measure zero, $\tilde{X}^n$ also converges weakly 
to $\tilde{X}$ by Theorem \ref{thm-Y}. 

Combining Theorem \ref{Thm} and \ref{thm-Y}, 
we may rely on the following algorithm; 
the expectation $\bE[f(X_T)I_{\{\tau >T \}}]$ 
is approximated with Monte-Carlo algorithm by 
\begin{met}
(Put-Call symmetry method)\\
\begin{equation}\label{ES-symmetrize}
\frac{1}{M}\sum_{i=1}^M 
\left\{
f(\tilde{X}_T^n(\omega_i))I_{\{\tilde{X}_T^n(\omega_i)>K\}}
-f(2K-\tilde{X}_T^n(\omega_i))I_{\{\tilde{X}_T^n(\omega_i)<K\}}
\right\}.
\end{equation}
\end{met}
As $M$ goes to $\infty$, (\ref{ES-symmetrize}) converges to 
\begin{equation}\label{ES-2}
\mathbf{E}[f(\tilde{X}^n_T)I_{\{\tilde{X}^n_T>K\}}]
-\mathbf{E}[f(2K-\tilde{X}^n_T)I_{\{\tilde{X}^n_T \leq K\}}]. 
\end{equation}
By the weak convergence of $X^n$, 
$(\ref{ES-2})$ converges to 
\begin{equation*}
\mathbf{E}[f(\tilde{X}_T)I_{\{\tilde{X}_T>K\}}]
-\mathbf{E}[f(2K-\tilde{X}_T)I_{\{\tilde{X}_T \leq K\}}], 
\end{equation*}
as $n \rightarrow \infty$. 
However, we don't know the exact rate of convergence in this algorithm 
since the coefficients are inevitably non-smooth at $K$\footnote{ 
There are many results on the rate of convergence 
when $\sigma$ and $\mu$ are smooth. For example, 
when $\sigma$ and $\mu$ are in $\cC_p^4$, the space of 
functions such that $4$-th derivative exists and have a polynomial growth, 
we have 
\begin{equation*}
|\bE[g(X^n_T)] - \bE[g(X_T)]| = O(n),
\end{equation*}
for any $g \in \cC_p^4$ (See Kloeden and Platen \citet{KP}, pp. 476). }.

The numerical results in the next section, however, may imply 
that 
the convergence rate of Put-Call symmetry method 
is better than that of the path-wise EM scheme. 
To prove this {\em conjecture} would be a very interesting 
mathematical challenge. 

\subsection{Numerical Results}
In this section, we give numerical examples 
using method $1$ (path-wise EM method) 
and method $2$ (Put-Call symmetry method) 
under Black-Scholes model and other CEV models. 
Let us consider the value of 
barrier call option with strike price $S$ 
and knockout barrier $K$. 
\subsubsection{Black-Scholes Model}\label{BS-result}
The underlying price process of Black-Scholes model 
is the unique solution of the following SDE; 
\begin{equation}
dX_t = rX_t dt + \sigma X_t dW_t,
\end{equation} 
for $r,\ \sigma \geq 0$. Then the value of barrier option 
is accurately-calculable since the joint distribution 
of Brownian motion and the hitting time of Brownian motion to a point is 
computable by using the reflection principle. The exact option price is 
given by the following;
\begin{equation*}
\mathbf{E}[(X_T - S)^+ I_{\{\tau >T\}}]
=V_{call}(X_0)- \left(\frac{K}{X_0}\right)^{\frac{2r}{\sigma^2}-1} 
V_{call}\left(\frac{K^2}{X_0}\right),\\
\end{equation*}
where 
\begin{eqnarray*}
\begin{split}
V_{call}(x)&= 
x e^{rT}\left(1- \Phi(d_{+}(x))\right)- S\left( 1- \Phi(d_{-}(x))\right),\\
d_{\pm}(x)&=  \frac{\log(\frac{S}{x}) 
-\left(r\pm \frac{\sigma^2}{2}\right)T}{\sigma\sqrt{T}},
\end{split}
\end{eqnarray*}
and $\Phi$ is the distribution 
function of the standard normal distribution. 

Fix a maturity time $T>0 $. 
Tables \ref{table1}-\ref{table4} give simulation results 
for the value of down-and-out call option 
$\mathbf{E}[(X_T - S)^+ I_{\{\tau>T\}}]$ 
under the path-wise Euler-Maruyama method (EM) 
and the Put-Call symmetry method (PCM). 
We take $[X_0=100,\ S=95,\ K=90,\ T=1],$ and 
\begin{enumerate}
\item[Table] \ref{table1}: $\sigma = 0.2,\ r = 0$, 
\item[Table] \ref{table2}: $\sigma = 0.2,\ r = 0.02$, 
\item[Table] \ref{table3}: $\sigma = 0.5,\ r = 0$, 
\item[Table] \ref{table4}: $\sigma = 0.5,\ r = 0.02$. 
\end{enumerate}
In the PCM, we symmetrize the functions 
$\mu(x) = rx$ and $\sigma(x)= \sigma x $ at $ K $. 

The number of simulation trials is set equal to 
the cube of the number 
of time steps for the Euler discretization. 
The errors in the last two columns 
are calculated as
$$ \frac{| \text{EM} (\text{PCM}) - 
\mbox{true option price}|}{\mbox{ true option price}} . $$

\begin{table}[htbp]
\caption{
Black-Scholes model
\protect\\ \footnotesize{
$X_0=100,\ S=95,\ K=90,\ \sigma = 0.2,\ r = 0,\ T=1,$ 
option price = $8.17140$}}
\begin{center}
\begin{tabular}{c c|c c|c c}
\hline
 No. of & No. of &  &&EM & PCM \\ 
simulation trials & time steps &EM & PCM & 
error(\%) & error(\%) \\ \hline
$1000$ &$10 $ & $8.881 $ & $7.816 $ & $  8.7 $ & $  4.4 $ \\
 $8000$ &$20 $ & $9.183 $ & $8.172 $ & $ 12.4 $ & $  0.0 $ \\
 $27000$ &$30 $ & $8.992 $ & $8.250 $ & $ 10.0 $ & $  1.0 $ \\
 $64000$ &$40 $ & $8.880 $ & $8.175 $ & $  8.7 $ & $  0.0 $ \\
 $125000$ &$50 $ & $8.804 $ & $8.190 $ & $  7.7 $ & $  0.2 $ \\
 $216000$ &$60 $ & $8.692 $ & $8.137 $ & $  6.4 $ & $  0.4 $ \\
 $343000$ &$70 $ & $8.697 $ & $8.127 $ & $  6.4 $ & $  0.5 $ \\
 $512000$ &$80 $ & $8.671 $ & $8.171 $ & $  6.1 $ & $  0.0 $ \\
 $729000$ &$90 $ & $8.672 $ & $8.207 $ & $  6.1 $ & $  0.4 $ \\
 $1000000$ &$100 $ & $8.597 $ & $8.135 $ & $  5.2 $ & $  0.4 $ \\
\hline
\end{tabular}
\end{center}
\label{table1}
\caption{
Black-Scholes model
\protect\\ 
\footnotesize{
$X_0=100,\ S=95,\ K=90,\ \sigma = 0.2,\ r = 0.02,\ T=1,$ 
option price = $9.31138$}}
\begin{center}
\begin{tabular}{c c|c c|c c}
\hline
 No. of & No. of &  &&EM & PCM \\ 
simulation trials & time steps &EM & PCM & 
error(\%) & error(\%) \\ \hline
$1000$ &$10 $ & $10.953 $ & $9.821$ & $ 17.6 $ & $  5.5 $ \\
 $8000$ &$20 $ & $10.050$ & $9.165$ & $ 7.9 $ & $  1.6$ \\
 $27000$ &$30 $ & $10.090$ & $9.226 $ & $  8.4 $ & $  0.9 $ \\
 $64000$ &$40 $ & $9.952 $ & $9.258 $ & $  6.9 $ & $  0.6 $ \\
 $125000$ &$50 $ & $9.974 $ & $9.302 $ & $  7.1 $ & $  0.1 $ \\
 $216000$ &$60 $ & $10.033 $ & $9.389 $ & $  7.7 $ & $  0.8 $ \\
 $343000$ &$70 $ & $9.911 $ & $9.298 $ & $  6.4 $ & $  0.1 $ \\
 $512000$ &$80 $ & $9.885 $ & $9.353 $ & $  6.2 $ & $  0.4 $ \\
 $729000$ &$90 $ & $9.839 $ & $9.306 $ & $  5.7 $ & $  0.1 $ \\
 $1000000$ &$100 $ & $9.811 $ & $9.309 $ & $  5.4 $ & $  0.0 $ \\
\hline
\end{tabular}
\end{center}
\label{table2}
\end{table}
\begin{table}[htbp]
\caption{
Black-Scholes model
\protect\\ 
\footnotesize{
$X_0=100,\ S=95,\ K=90,\ \sigma = 0.5,\ r = 0,\ T=1,$ 
option price = $9.37170$}}
\begin{center}
\begin{tabular}{c c|c c|c c}
\hline
 No. of & No. of &  &&EM & PCM \\ 
simulation trials & time steps &EM & PCM & 
error(\%) & error(\%) \\ \hline
$1000$ &$10 $ & $15.981 $ & $9.521 $ & $ 70.5 $ & $  1.6 $ \\
 $8000$ &$20 $ & $14.455 $ & $9.742 $ & $ 54.2 $ & $  4.0 $ \\
 $27000$ &$30 $ & $13.074 $ & $9.126 $ & $ 39.5 $ & $  2.6 $ \\
 $64000$ &$40 $ & $12.837 $ & $9.479 $ & $ 37.0 $ & $  1.1 $ \\
 $125000$ &$50 $ & $12.281 $ & $9.251 $ & $ 31.0 $ & $  1.3 $ \\
 $216000$ &$60 $ & $11.942 $ & $9.231 $ & $ 27.4 $ & $  1.5 $ \\
 $343000$ &$70 $ & $11.838 $ & $9.307 $ & $ 26.3 $ & $  0.7 $ \\
 $512000$ &$80 $ & $11.750 $ & $9.450 $ & $ 25.4 $ & $  0.8 $ \\
 $729000$ &$90 $ & $11.549 $ & $9.392 $ & $ 23.2 $ & $  0.2 $ \\
 $1000000$ &$100 $ & $11.443 $ & $9.319 $ & $ 22.1 $ & $  0.6 $ \\
\hline
\end{tabular}
\end{center}
\label{table3}
\caption{
Black-Scholes model
\protect\\
\footnotesize{
$X_0=100,\ S=95,\ K=90,\ \sigma = 0.5,\ r = 0.02,\ T=1,$ 
option price = $10.02470$}}
\begin{center}
\begin{tabular}{c c|c c|c c}
\hline
 No. of & No. of &  &&EM & PCM \\ 
simulation trials & time steps &EM & PCM & 
error(\%) & error(\%) \\ \hline
$1000$ &$10 $ & $15.488 $ & $9.688 $ & $ 54.5 $ & $  3.4 $ \\
 $8000$ &$20 $ & $14.687 $ & $9.540 $ & $ 46.5 $ & $  4.8 $ \\
 $27000$ &$30 $ & $14.065 $ & $10.341 $ & $ 40.3 $ & $  3.2 $ \\
 $64000$ &$40 $ & $13.472 $ & $10.191 $ & $ 34.4 $ & $  1.7 $ \\
 $125000$ &$50 $ & $13.012 $ & $9.779 $ & $ 29.8 $ & $  2.4 $ \\
 $216000$ &$60 $ & $12.981 $ & $10.257 $ & $ 29.5 $ & $  2.3 $ \\
 $343000$ &$70 $ & $12.707 $ & $9.991 $ & $ 26.8 $ & $  0.3 $ \\
 $512000$ &$80 $ & $12.391 $ & $9.916 $ & $ 23.6 $ & $  1.1 $ \\
 $729000$ &$90 $ & $12.418 $ & $10.098 $ & $ 23.9 $ & $  0.7 $ \\
 $1000000$ &$100 $ & $12.235 $ & $10.025 $ & $ 22.0 $ & $  0.0 $ \\\hline
\end{tabular}
\end{center}
\label{table4}
\end{table}

One sees that, in the experiments, the Put-Call symmetry method always beats 
the path-wise EM method. 

\subsubsection{CEV Model}\label{cevsim}
Here the underlying price process
is a solution of the following SDE; 
\begin{equation}\label{CEV}
dX_t = rX_t dt + \sigma X_t^\beta dW_t,
\end{equation} 
for $r,\ \sigma \geq 0$ and $ \beta \geq \frac{1}{2}$. 
We take $ 0.75 $ in the experiments. 

Tables \ref{table7}-\ref{table8} are simulation results 
for down-and-out call options with EM and PCM. 
We set parameters to 
$[X_0=100,\ S=95,\ K=90,\ \beta= 0.75,\ \sigma = 0.45,\ T=1],$ and 
\begin{enumerate}
\item[Table] \ref{table7}: $r=0$, 
\item[Table] \ref{table8}: $r=0.02$, 
\end{enumerate} 
in the experiments. 
For CEV model, we do not have an analytic formula. So, 
as a benchmark, we used numerical results by 
the path-wise Euler-Maruyama scheme 
where the number of 
time steps for the Euler discretization is $5,000$ 
and that of a Monte-Carlo simulation is $50,000,000$. 
Note that since we are calculating the prices of
down-and-out call options, we do not need to care about 
the singularity at 
$ x = 0 $ in the SDE. 

\begin{table}[htbp]
\caption{
CEV model
\protect\\ \footnotesize{
$X_0=100,\ S=95,\ K=90,\ \beta= 0.75,\ \sigma = 0.45,\ r = 0,\ T=1,$ 
\protect\\ benchmark of option price = $7.50095$}}
\begin{center}
\begin{tabular}{c c|c c|c c}
\hline
 No. of & No. of &  &&EM & PCM \\ 
simulation trials & time steps &EM & PCM & 
error(\%) & error(\%) \\ \hline
$1000$ &$10 $ & $7.781 $ & $7.068 $ & $  3.8 $ & $  5.7 $ \\
 $8000$ &$20 $ & $7.997 $ & $7.504 $ & $  6.6 $ & $  0.1 $ \\
 $27000$ &$30 $ & $7.805 $ & $7.397 $ & $  4.1 $ & $  1.4 $ \\
 $64000$ &$40 $ & $7.758 $ & $7.379 $ & $  3.5 $ & $  1.6 $ \\
 $125000$ &$50 $ & $7.730 $ & $7.412 $ & $  3.1 $ & $  1.2 $ \\
 $216000$ &$60 $ & $7.733 $ & $7.407 $ & $  3.1 $ & $  1.2 $ \\
 $343000$ &$70 $ & $7.714 $ & $7.422 $ & $  2.9 $ & $  1.0 $ \\
 $512000$ &$80 $ & $7.691 $ & $7.423 $ & $  2.6 $ & $  1.0 $ \\
 $729000$ &$90 $ & $7.680 $ & $7.414 $ & $  2.4 $ & $  1.1 $ \\
 $1000000$ &$100 $ & $7.654 $ & $7.414 $ & $  2.1 $ & $  1.1 $ \\
\hline
\end{tabular}
\end{center}
\label{table7}
\caption{
CEV model
\protect\\ 
\footnotesize{
$X_0=100,\ S=95,\ K=90,\ \beta= 0.75,\ \sigma = 0.45,\ r = 0.02,\ T=1,$ 
\protect\\ benchmark of option price = $8.82718$}}
\begin{center}
\begin{tabular}{c c|c c|c c}
\hline
 No. of & No. of &  &&EM & PCM \\ 
simulation trials & time steps &EM & PCM & 
error(\%) & error(\%) \\ \hline
$1000$ &$10 $ & $9.418 $ & $8.918 $ & $  6.7 $ & $  1.0 $ \\
 $8000$ &$20 $ & $9.349 $ & $8.986 $ & $  5.9 $ & $  1.8 $ \\
 $27000$ &$30 $ & $9.242 $ & $8.791 $ & $  4.7 $ & $  0.4 $ \\
 $64000$ &$40 $ & $9.193 $ & $8.772 $ & $  4.1 $ & $  0.6 $ \\
 $125000$ &$50 $ & $9.109 $ & $8.760 $ & $  3.2 $ & $  0.8 $ \\
 $216000$ &$60 $ & $9.089 $ & $8.751 $ & $  3.0 $ & $  0.9 $ \\
 $343000$ &$70 $ & $9.063 $ & $8.742 $ & $  2.7 $ & $  1.0 $ \\
 $512000$ &$80 $ & $9.009 $ & $8.722 $ & $  2.1 $ & $  1.2 $ \\
 $729000$ &$90 $ & $9.027 $ & $8.745 $ & $  2.3 $ & $  0.9 $ \\
 $1000000$ &$100 $ & $8.995 $ & $8.722 $ & $  1.9 $ & $  1.2 $ \\
\hline
\end{tabular}
\end{center}
\label{table8}
\end{table}


\section{Put-Call Symmetry Method Applied to Stochastic Volatility Models}\label{sec-svm}
In this section, we slightly extend the put-call symmetry method 
to apply it to 
stochastic volatility models which are described by two-dimensional 
SDE. Theoretical backgrounds of the extension is given in \citet{AI}. 

A generic stochastic volatility model is given as follows;
\begin{equation}\label{GSV1}
\begin{split}
dX_t &= \sigma_{11} (X_t,V_t) dW_t  + \mu_1 (X_t,V_t) \,dt  \\
dV_t &= \sigma_{21} (V_t) dW_t + \sigma_{22} (V_t) dB_t + \mu_2 (V_t) \, dt,
\end{split}
\end{equation}
where $ W $ and $ B $ are mutually independent ($1$-dim) Wiener processes, 
\begin{equation*}
\sigma (x, v) = 
\begin{pmatrix}
\sigma_{11} (x,v) & 0 \\
\sigma_{21} (v) & \sigma_{22} (v)  \\
\end{pmatrix}
\end{equation*}
and $ \mu (x,v) = (\mu_1 (x,v), \mu_2 (v) ) $
are continuous functions 
on $ \mathbf{R}^2 $.  
Here we simply assume that $\sigma$ and 
$\mu$ are sufficiently regular (not so irregular) 
to allow a unique weak solution in (\ref{GSV1}). 
The independence of $ V $ against $ S $ plays an important role
in applying our scheme. In fact, thanks to the property, we 
may simply work on the symmetrization with respect to the reflection 
$(x,y) \mapsto (2K-x,y)$.
Let us be more precise. 
Let $ (X, V) $ be a $2$-dimensional diffusion process given 
as a (weak) unique solution of SDE (\ref{GSV1}), and  
$ \tau $ be
the first passage time of $X$ to $K$. 
We note that $\tau$ is not dependent on $V$.
We say that {\em arithmetic put-call symmetry} at $ K $  
holds for $ (X,V) $ if 
\begin{equation*}
(X_t,V_t) 1_{ \{ \tau \leq t \} }
\overset{\mathop{d}}{=} (2K -X_t,V_t) 1_{ \{\tau \leq t \} } 
\end{equation*}
for any $ t > 0 $. 

Mathematically, we rely on the following result from \citet{AI}. 
\begin{prop}[\citet{AI}]\label{prop-CLSV}
If the coefficients satisfy the following conditions;  
\begin{equation}\label{SVsigma}
\sigma_{11} (x,v) = -\sigma_{11} (2K-x,v),
\end{equation}
\begin{equation}\label{SVmu}
\mu_1(x,v)=-\mu(2K-x,v),
\end{equation}
for $(x,v) \in (\bR\setminus\{ K \})\times \bR$, 
then APCS at $ K $ holds for $ (X,V) $.
\end{prop}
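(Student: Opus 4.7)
The plan is to transplant the proof of Proposition \ref{prop-CL} to the two-dimensional setting, exploiting two simplifications. First, (\ref{SVsigma}) corresponds to the special case $\varepsilon \equiv -1$ of (\ref{sigma}), so no change of Brownian motion is needed. Second, and crucially, the coefficients of the $V$-equation depend only on $V$; this is the ``independence of $V$ against $S$'' stressed in the text, and it lets the reflection act on the first coordinate without touching the second.

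I would introduce the reflected process
\[
\tilde{X}_t := X_{t \wedge \tau} - (X_t - X_{t \wedge \tau}),
\]
which equals $X_t$ for $t \leq \tau$ and equals $2K - X_t$ for $t \geq \tau$, and verify that $(\tilde{X}, V)$ is a weak solution of (\ref{GSV1}) driven by the same Brownian pair $(W, B)$ and starting from $(X_0, V_0)$. The equation for $V$ is preserved trivially, since $V$ itself is unchanged and its coefficients do not see $X$. For $\tilde{X}$, on $\{t \geq \tau\}$ one writes $d\tilde{X}_t = -dX_t$ and substitutes $X_t = 2K - \tilde{X}_t$; the oddness hypotheses (\ref{SVsigma})--(\ref{SVmu}) then absorb both minus signs and return the first line of (\ref{GSV1}) for $\tilde{X}$.

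Applying the assumed uniqueness in law of (\ref{GSV1}) then gives $(\tilde{X}, V) \stackrel{d}{=} (X, V)$ as processes. Since $\tilde{X}$ agrees with $X$ up to and including time $\tau$, the first passage time of $\tilde{X}$ at $K$ is again $\tau$, and therefore $\tilde{X}_t\, 1_{\{\tau \leq t\}} = (2K - X_t)\, 1_{\{\tau \leq t\}}$ identically. Pushing this pathwise identity through the equality in law yields the desired APCS at $K$.

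The one subtle point worth flagging is that $(X, V)$ and $(\tilde{X}, V)$ are driven by the \emph{same} Brownian pair from the same starting point yet are plainly not pathwise equal; the oddness hypotheses thereby force pathwise uniqueness of (\ref{GSV1}) to fail (in the spirit of the Tanaka SDE), so the closing step must rely on uniqueness in law rather than pathwise uniqueness. Beyond noting this, the argument is a direct transcription of the one-dimensional computation.
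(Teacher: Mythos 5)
The paper itself gives no proof of Proposition \ref{prop-CLSV}, deferring to \citet{AI}; your argument is the natural two-dimensional transcription of the paper's own proof of Proposition \ref{prop-CL} (reflect the first coordinate after $\tau$, check that $(\tilde{X},V)$ again solves (\ref{GSV1}), and invoke uniqueness in law), and it is correct. Your two side observations are also accurate: since (\ref{SVsigma}) is the case $\varepsilon \equiv -1$ of (\ref{sigma}), no change of driving Brownian motion is needed, and the closing step must indeed rest on uniqueness in law rather than pathwise uniqueness.
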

On the basis of Proposition \ref{prop-CLSV}, we 
construct another diffusion $ (\tilde{X},V) $ that satisfies
APCS at $ K $ in a totally similar way as the one dimensional case, and 
we obtain a static hedging formula corresponding to Theorem \ref{Thm}. 
\begin{prop}\label{SVPCS}
Let $ K > 0 $ and put 
\begin{equation*}
\tilde{\sigma}_{11} (x,v) = 
\begin{cases}
\sigma_{11} (x,v) & x \geq K \\
- \sigma_{11} (2K-x,v) & x < K
\end{cases},
\end{equation*}
\begin{equation*}
\tilde{\mu}_{1} (x,v) = 
\begin{cases}
\mu_{1} (x,v) & x \geq K \\
- \mu_{1} (2K-x,v) & x < K 
\end{cases},
\end{equation*}
and let $ \tilde{X} $ be the unique (weak) solution to 
\begin{equation*}
d\tilde{X}_t = \tilde{\sigma}_{11} (\tilde{X}_t,V_t) dW_t  
+ \tilde{\mu}_1 (\tilde{X}_t,V_t) \,dt,
\end{equation*}
where $ V $ is the solution to (\ref{GSV1}). 
Then, it holds for any bounded Borel function $ f $ and $ t>0 $ that
\begin{equation}\label{STH7}
\begin{split}
& E [ f( X_t )1_{\{X_t>K\}} 1_{\{ \tau_K > t\}}]  \\
& \qquad = 
E[f(\tilde{X}_t)1_{\{\tilde{X}_t > K\}} ] 
- E [f (2K-\tilde{X}_t)1_{\{ \tilde{X}_t < K \}}],
\end{split}
\end{equation}
where $ X $ is the solution to (\ref{GSV1}) with $ X_0 > K $. 
\end{prop}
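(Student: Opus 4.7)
The plan is to follow the one-dimensional proof of Theorem \ref{Thm} line by line, replacing the scalar symmetry result (Proposition \ref{prop-CL}) with its 2D counterpart (Proposition \ref{prop-CLSV}). The argument has three stages: first verify that the symmetrized coefficients satisfy the hypotheses of Proposition \ref{prop-CLSV}; second derive a static-hedging identity for the symmetrized pair; third use pathwise identification up to the hitting time to transport the identity back to $X$.

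First I would check that $\tilde{\sigma}_{11}$ and $\tilde{\mu}_1$ satisfy the symmetry conditions (\ref{SVsigma}) and (\ref{SVmu}). This is immediate from their piecewise definitions: for $x>K$ one has $\tilde{\sigma}_{11}(x,v)=\sigma_{11}(x,v)$ while $\tilde{\sigma}_{11}(2K-x,v)=-\sigma_{11}(x,v)$, and the case $x<K$ is symmetric; the identity for $\tilde{\mu}_1$ is the same. Proposition \ref{prop-CLSV} then gives APCS at $K$ for the pair $(\tilde{X},V)$. Applying the same decomposition used to derive (\ref{PCSM}) — split $\{\tilde{X}_t>K\}^c\cap\{\tilde{\tau}>t\}^c$, apply the strong Markov property of $(\tilde X,V)$ at $\tilde{\tau}$, and use APCS to replace the conditional expectation on $\{\tilde X_t<K\}$ by one on $\{\tilde X_t>K\}$ reflected through $K$ — yields
\begin{equation*}
E[f(\tilde{X}_t)1_{\{\tilde{\tau}>t\}}]
= E[f(\tilde{X}_t)1_{\{\tilde{X}_t>K\}}] - E[f(2K-\tilde{X}_t)1_{\{\tilde{X}_t<K\}}],
\end{equation*}
where $\tilde{\tau}:=\inf\{t\ge 0:\tilde{X}_t\le K\}$.

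The second step is the pathwise identification. By the construction (\ref{sigma2})--type formulas in Proposition \ref{SVPCS}, the coefficient pair $(\tilde{\sigma}_{11},\tilde{\mu}_1)$ coincides with $(\sigma_{11},\mu_1)$ on $\{x\ge K\}\times\mathbf{R}$, and the $V$--dynamics are identical in both systems. Since $X_0>K$, weak uniqueness for (\ref{GSV1}) applied to the stopped coefficients implies that the joint laws of $((X_{t\wedge\tau_K},V_{t\wedge\tau_K}))_{t\ge 0}$ and $((\tilde{X}_{t\wedge\tilde{\tau}},V_{t\wedge\tilde{\tau}}))_{t\ge 0}$ agree; in particular $\tau_K\stackrel{d}{=}\tilde{\tau}$ jointly with the process, so $E[f(X_t)1_{\{X_t>K\}}1_{\{\tau_K>t\}}]=E[f(\tilde{X}_t)1_{\{\tilde{X}_t>K\}}1_{\{\tilde{\tau}>t\}}]$. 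Combining this with the displayed identity and observing that on $\{\tilde{\tau}>t\}$ and $X_0>K$ one automatically has $\tilde{X}_t>K$ gives (\ref{STH7}).

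The delicate point is justifying the 2D strong-Markov reflection step: because only the $x$--coordinate is reflected while $V$ is left intact, one must check that the conditional distribution of $(\tilde{X}_t,V_t)$ given $\mathcal{F}_{\tilde{\tau}}$ on $\{\tilde{\tau}\le t\}$ really is invariant under $(x,v)\mapsto(2K-x,v)$. This relies crucially on the structural assumption that the $V$--SDE has no dependence on $X$ (so $V$ is not disturbed by the reflection) together with the sign-symmetry of $\sigma_{11}$ in $x$ that allows the driver $W$ to be replaced by a reflected Brownian motion while $B$ is unchanged. This is exactly the content borrowed from \citet{AI} via Proposition \ref{prop-CLSV}, so once that result is invoked the remainder of the argument is a verbatim transcription of the proof of Theorem \ref{Thm}.
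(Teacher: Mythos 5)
Your argument is correct and is precisely the route the paper intends: the paper's own proof of Proposition \ref{SVPCS} is simply ``Omitted,'' with the surrounding text indicating that one should repeat the proof of Theorem \ref{Thm} with Proposition \ref{prop-CLSV} in place of Proposition \ref{prop-CL}, which is exactly what you do. Your verification of the symmetry conditions, the pathwise identification of $(X,V)$ and $(\tilde X,V)$ up to the hitting time via weak uniqueness, and the observation that $\{\tilde\tau>t\}\subset\{\tilde X_t>K\}$ for $X_0>K$ correctly fill in the details the paper leaves out.
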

\begin{proof}
Omitted. 
\end{proof}

\subsection{Numerical Results on Single Barrier Options 
under Stochastic Volatility Models}\label{num;sv;single}
In this section we 
give numerical examples 
of the price of a single barrier option under 
Heston's and SABR type stochastic volatility models,
using numerical method based on (\ref{STH7}). 

The Euler-Maruyama scheme of 
the solution of SDE (\ref{GSV1})
with respect to the net $\bigtriangleup_n=\{t_0,t_1,\cdots, t_n\}$  
is given by the following; 
\begin{equation*}
\begin{split}
X^n_{t_{k+1}} 
&= X^n_{t_k} + \sigma_{11}(X^n_{t_k}, V^n_{t_k})(W_{t_{k+1}} - W_{t_k}) 
+\mu_1(X^n_{t_k},V^n_{t_k})(t_{k+1}-t_k),\\
V^n_{t_{k+1}}
&= V^n_{t_k} + \sigma_{21}(V^n_{t_k})(W_{t_{k+1}} - W_{t_k}) 
+\sigma_{22}(V^n_{t_k})(B_{t_{k+1}} - B_{t_k})\\
&\quad  
+\mu_2(V^n_{t_k})(t_{k+1}-t_k),
\end{split}
\end{equation*}
for $ k=0,1,2,\cdots,n-1$. 
With an interpolation, $ X_t $ for $t_k < t \leq t_{k+1} $
is obtained as well.   
Here $W$ and $B$ denotes 
two independent $1$-dimensional Brownian motions. 
The increments $W_{t_{k+1}} - W_{t_k}$ and $B_{t_{k+1}} - B_{t_k}$ 
are simulated by quasi-random independent Gaussian variables.

The underlying price process of Heston model 
is given as follows;
\begin{equation}
\begin{cases}
dX_t^{}=rX_t dt+\sqrt{V_t^{}}X_t^{}dW_{t},\\
dV _t^{}=\kappa(\theta-V_t^{})dt +\nu \sqrt{V _t}^{}(\rho dW_t+\sqrt{{1-\rho^2}}dB_t) \label{V}
\end{cases} 
\end{equation}
for $r,\, \kappa,\, \theta,\, \nu >0 $ and $-1 \leq \rho \leq 1$. 
Then the symmetrized path $\tilde{X}$ 
is constructed as a solution to the following SDE;
\begin{eqnarray*}
\begin{split}
d\tilde{X_t}^{}&= \left( r \tilde{X_t}I_{\{ \tilde{X_t}> K\}} 
- r (2K-\tilde{X_t}) I_{\{ \tilde{X_t}<K\}}  \right) dt \\ 
& \quad 
+ \left( {\sqrt{V_t}}\tilde{X_t}I_{\{ \tilde{X_t}> K\}}  
- \sqrt{V_t} (2K-\tilde{X_t})I_{\{ \tilde{X_t}< K\}} \right) dW_t, 
\end{split}
\end{eqnarray*}
where $V$ is the solution of SDE (\ref{V}). 

The underlying asset price of $\lambda$-SABR model is described as 
\begin{equation}
\begin{cases}
dX_t^{}=rX_t dt+V_t^{}X_t^{\beta}dW_t,\\
dV_t^{}=\lambda(\theta-V_t) dt + \nu V _t^{}(\rho dW_t+\sqrt{1-\rho ^2}dB_t)
\label{V1}
\end{cases} 
\end{equation}
for $r,\, \lambda,\, \theta,\, \nu >0 $, $\beta \geq \frac{1}{2}$ and $-1 \leq \rho \leq 1$. 
Then the symmetrized process $\tilde{X}$ is given by the following SDE; 
\begin{eqnarray*}
\begin{split}
d\tilde{X_t}^{}&=\left( r\tilde{X_t}I_{\{ \tilde{X_t}>K\}}
- r(2K-\tilde{X_t}) I_{\{ \tilde{X_t}< K\}}\right) dt \\ 
& \quad 
+\left( V_t^{} {\tilde{X_t}}^\beta I_{\{\tilde{X_t}> K\}}
- V_t (2K-\tilde{X_t})^\beta I_{\{\tilde{X_t} < K\}} \right) dW_{1,t},
\end{split}
\end{eqnarray*}
where $V$ is the solution of SDE (\ref{V1}). 

Tables \ref{sim-h} - \ref{sim-r} below
are simulation results of 
the price of a single barrier call option 
under Heston's and $\lambda$-SABR model, respectively.
We set the parameters as 
$[X_0=100,\ V_0=0.03,\ K=95,\ H=90,\ \theta = 0.03,\ r = 0,\ 
T=1,\ \kappa =1,\ 
\rho=-0.7,\ \nu = 0.03]$ in Heston model 
(Table \ref{sim-h} and Table \ref{sim-h2}), 
and   
$[X_0=100,\ V_0= 0.5,\ 
S=95,\ K=90,\ \theta = 0.03,\ 
r = 0,\ T=1,\ \beta =0.75,\ 
\lambda = 1.0,\ 
\rho=-0.7,\ \nu = 0.3] $ 
in $\lambda$-SABR model (Table \ref{sim-r1} and Table \ref{sim-r1}), and 
\begin{enumerate}
\item[Table] \ref{sim-h} and \ref{sim-r1}: $r=0$, 
\item[Table] \ref{sim-h2} and \ref{sim-r}: $r=0.02$, 
\end{enumerate} 
in the experiments. 
Benchmark is given in the same setting of Section \ref{cevsim}. 
\begin{table}[htbp]
\caption{Heston model
\protect\\ 
\footnotesize{ 
$X_0=100,\ V_0=0.03,\ K=95,\ H=90,\ \theta = 0.03,\ r = 0,\ 
T=1,\ \kappa =1,\ 
\rho=-0.7,\ \nu = 0.03,\  
\mbox{benchmark\ of\ option \ price } = 7.92706$}}
\begin{center}
\begin{tabular}{c c|c c|c c}
\hline
 No. of & No. of &  &&EM & PCM \\ 
simulation trials & time steps &EM & PCM & 
error(\%) & error(\%) \\ \hline
$1000$ &$10 $ & $8.638 $ & $7.953 $ & $9.0 $ & $0.3$ \\
$8000$ &$20 $ & $8.761 $ & $8.167 $ & $10.5 $ & $3.0$ \\
$27000$ &$30 $ & $8.466 $ & $7.932 $ & $6.8 $ & $0.1$ \\
$64000$ &$40 $ & $8.477 $ & $8.017 $ & $6.9 $ & $1.1$ \\
$125000$ &$50 $ & $8.366 $ & $7.892 $ & $5.5 $ & $0.4$ \\
$216000$ &$60 $ & $8.301 $ & $7.877 $ & $4.7 $ & $0.6$ \\
$343000$ &$70 $ & $8.246 $ & $7.875 $ & $4.0 $ & $0.7$ \\
$512000$ &$80 $ & $8.273 $ & $7.902 $ & $4.4 $ & $0.3$ \\
$729000$ &$90 $ & $8.221 $ & $7.875 $ & $3.7 $ & $0.7$ \\
$1000000$ &$100 $ & $8.212 $ & $7.871 $ & $3.6 $ & $0.7$ \\
\hline
\end{tabular}
\end{center}
\label{sim-h}
\caption{Heston model
\protect\\ 
\footnotesize{ 
$X_0=100,\ V_0=0.03,\ K=95,\ H=90,\ \theta = 0.03,\ r = 0.02,\ 
T=1,\ \kappa =1,\ 
\rho=-0.7,\ \nu = 0.03,\  
\mbox{benchmark\ of\ option \ price } = 9.15602$}}
\begin{center}
\begin{tabular}{c c|c c|c c}
\hline
 No. of & No. of &  &&EM & PCM \\ 
simulation trials & time steps &EM & PCM & 
error(\%) & error(\%) \\ \hline
$1000$ &$10 $ & $10.308 $ & $9.192 $ & $12.6 $ & $0.4$ \\
$8000$ &$20 $ & $9.828 $ & $9.197 $ & $7.3 $ & $0.5$ \\
$27000$ &$30 $ & $9.572 $ & $8.953 $ & $4.5 $ & $2.2$ \\
$64000$ &$40 $ & $9.674 $ & $9.133 $ & $5.7 $ & $0.3$ \\
$125000$ &$50 $ & $9.632 $ & $9.134 $ & $5.2 $ & $0.2$ \\
$216000$ &$60 $ & $9.552 $ & $9.093 $ & $4.3 $ & $0.7$ \\
$343000$ &$70 $ & $9.525 $ & $9.096 $ & $4.0 $ & $0.7$ \\
$512000$ &$80 $ & $9.524 $ & $9.135 $ & $4.0 $ & $0.2$ \\
$729000$ &$90 $ & $9.498 $ & $9.116 $ & $3.7 $ & $0.4$ \\
$1000000$ &$100 $ & $9.454 $ & $9.106 $ & $3.2 $ & $0.5$ \\
\hline
\end{tabular}
\end{center}
\label{sim-h2}
\end{table}
\begin{table}[htbp]
\caption{
$\lambda$-SABR model 
\protect\\ 
\footnotesize{
$X_0=100,\ V_0= 0.5,\ 
S=95,\ K=90,\ \theta = 0.03,\ 
r = 0,\ T=1,\ \beta =0.75,\ 
\lambda = 1.0,\ 
\rho=-0.7,\ \nu = 0.3,\ 
\mbox{benchmark\ of\ option\ price } = 6.59534$}}
\begin{center}
\begin{tabular}{c c|c c|c c}
\hline
 No. of & No. of &  &&EM & PCM \\ 
simulation trials & time steps &EM & PCM & 
error(\%) & error(\%) \\ \hline
$1000$ &$10$  & $6.643 $ & $6.478 $ & $0.7 $ & $1.8$ \\
$8000$ & $20$ & $6.708 $ & $6.591 $ & $1.7 $ & $0.1$ \\
$27000$& $30$  & $6.701 $ & $6.584 $ & $1.6 $ & $0.2$ \\
$64000 $ & $40$& $6.671 $ & $6.565 $ & $1.1 $ & $0.5$ \\
$125000 $ & $50$& $6.668 $ & $6.568 $ & $1.1 $ & $0.4$ \\
$216000 $  &$60$& $6.672 $ & $6.581 $ & $1.2 $ & $0.2$ \\
$343000 $ & $70$& $6.669 $ & $6.585 $ & $1.1 $ & $0.2$ \\
$512000 $  &$80$& $6.671 $ & $6.597 $ & $1.1 $ & $0.0$ \\
$729000 $ &$90$ & $6.655 $ & $6.579 $ & $0.9 $ & $0.2$ \\
$1000000 $& $100$ & $6.646 $ & $6.576 $ & $0.8 $ & $0.3$ \\
\hline
\end{tabular}
\end{center}
\label{sim-r1}
\caption{
$\lambda$-SABR model 
\protect\\ 
\footnotesize{
$X_0=100,\ V_0= 0.5,\ 
S=95,\ K=90,\ \theta = 0.03,\ 
r = 0.02,\ T=1,\ \beta =0.75,\ 
\lambda = 1.0,\ 
\rho=-0.7,\ \nu = 0.3,\ 
\mbox{benchmark\ of\ option\ price } = 8.71005$}}
\begin{center}
\begin{tabular}{c c|c c|c c}
\hline
 No. of & No. of &  &&EM & PCM \\ 
simulation trials & time steps &EM & PCM & 
error(\%) & error(\%) \\ \hline
$1000$ &$10 $ & $9.493 $ & $8.779 $ & $9.0 $ & $0.8$ \\
$8000$ &$20 $ & $9.081 $ & $8.582 $ & $4.3 $ & $1.5$ \\
$27000$ &$30 $ & $9.106 $ & $8.723 $ & $4.5 $ & $0.2$ \\
$64000$ &$40 $ & $9.029 $ & $8.656 $ & $3.7 $ & $0.6$ \\
$125000$ &$50 $ & $9.007 $ & $8.683 $ & $3.4 $ & $0.3$ \\
$216000$ &$60 $ & $9.008 $ & $8.710 $ & $3.4 $ & $0.0$ \\
$343000$ &$70 $ & $8.988 $ & $8.707 $ & $3.2 $ & $0.0$ \\
$512000$ &$80 $ & $8.940 $ & $8.670 $ & $2.6 $ & $0.5$ \\
$729000$ &$90 $ & $8.923 $ & $8.671 $ & $2.4 $ & $0.4$ \\
$1000000$ &$100 $ & $8.929 $ & $8.680 $ & $2.5 $ & $0.3$ \\
\hline
\end{tabular}
\end{center}
\label{sim-r}
\end{table}
We again observe the superiority of our scheme. 

\subsection{Application to Pricing Double Barrier Options under the Stochastic Volatility Models}\label{sec-double}
Fix $K, K' >0$. Let us 
consider a double barrier option knocked out if price process $X$ 
hit either the boundary $K$ or $K+K'$. 
The price of a 
double barrier option with payoff function $f$ 
and barriers $K$ and 
$K+K'$ is given by 
$\bE[f(X_T)I_{\{\tau_{(K,K+K')} >T\}}]$, 
where $\tau_{(K,K+K')}$ is the first exit time of $X$ from $(K,K+K')$. 
In a similar way as the static hedging formula of 
a single barrier option, we obtain a static hedging 
formula 
if the price process satisfies APCS both at $K$ and $K+K'$. 
\begin{prop}[\citet{AI}]\label{propdouble}
If $X$ satisfies APCS at both $ K $ and 
$ K+K'$, 
then for any bounded Borel function $ f $ 
and $ T > 0 $, we have 
\begin{equation}\label{double}
\begin{split}
&\bE[f(X_T) I_{\{\tau_{(K,K+K')} >T\}}]\\
&=\sum_{n\in \mathbf{Z}} \bE[f({X}_T-2nK')I_{[K + 2n K', K + 
(2n+1) K')}(X_T)]\\
& \quad -\sum_{n\in \mathbf{Z}} \bE f(2K-({X}_T-2nK'))
I_{[ K + (2n-1) K' K + 2n K')}(X_T)], 
\end{split}
\end{equation}
\end{prop}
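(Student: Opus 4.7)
The plan is to recast the right-hand side of (\ref{double}) as a single expectation $\bE[\tilde f(X_T)]$, where $\tilde f$ is the ``method-of-images'' extension of $f\bigm|_{[K,K+K')}$ to all of $\bR$, and then to prove this expectation equals the barrier price by a strong-Markov-plus-APCS argument at both boundaries. Concretely, I would define
\begin{equation*}
\tilde f(x) =
\begin{cases}
f(x - 2nK') & \text{if } x \in [K + 2nK',\, K + (2n+1)K'), \\
- f\bigl(2K - x + 2nK'\bigr) & \text{if } x \in [K + (2n-1)K',\, K + 2nK'),
\end{cases}
\end{equation*}
for $n \in \bZ$. A direct check gives $\tilde f(2K - x) = -\tilde f(x)$ and $\tilde f(2(K+K') - x) = -\tilde f(x)$, so $\tilde f$ is antisymmetric under both reflections generating the infinite dihedral group. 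Moreover, $X_T(\omega)$ lies in exactly one image cell for each $\omega$, so $\bE[\tilde f(X_T)]$ is pointwise identical to the two sums on the right-hand side of (\ref{double}) and no convergence question arises.

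It then suffices to prove $\bE[\tilde f(X_T)] = \bE[f(X_T) I_{\{\tau > T\}}]$, where $\tau := \tau_{(K,K+K')}$. Splitting $\bE[\tilde f(X_T)] = \bE[\tilde f(X_T) I_{\{\tau > T\}}] + \bE[\tilde f(X_T) I_{\{\tau \leq T\}}]$, on $\{\tau > T\}$ we have $X_T \in (K, K+K')$, whence $\tilde f(X_T) = f(X_T)$ and this piece gives the desired term. For the second piece I would condition on $\cF_\tau$ and partition according to whether $X_\tau = K$ or $X_\tau = K+K'$. On $\{X_\tau = K\}$, the strong Markov property combined with APCS at $K$ applied to the restart from $K$ shows that the conditional law of $X_T$ given $\cF_\tau$ agrees with that of $2K - X_T$; by the antisymmetry $\tilde f(2K - \cdot) = -\tilde f(\cdot)$, the corresponding conditional expectation equals its own negation and hence vanishes. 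The analogous use of APCS at $K+K'$ handles $\{X_\tau = K+K'\}$, so $\bE[\tilde f(X_T) I_{\{\tau \leq T\}}] = 0$.

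The main obstacle is precisely this conditional-reflection step. After the single reflection about, say, $K$ at time $\tau$, the reflected trajectory may cross the opposite barrier $K+K'$ and wander into arbitrary image cells, so the required cancellation cannot rest on a one-sided reflection but needs the full dihedral symmetry of $\tilde f$. This is exactly where both hypotheses --- APCS at $K$ \emph{and} APCS at $K+K'$ --- enter: the extension $\tilde f$ is globally antisymmetric only because of the joint assumption, and this global antisymmetry is what promotes the single-step reflection of post-$\tau$ paths into a complete cancellation on $\{\tau \leq T\}$ without any inductive bookkeeping of successive excursions.
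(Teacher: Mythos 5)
Your proof is correct, but be aware that the paper itself offers no proof of Proposition \ref{propdouble}: it is quoted from \citet{AI}, and the only in-paper analogue is the derivation of the single-barrier identity (\ref{PCSM}) in Section \ref{subsec-APCS}, which conditions on $\cF_\tau$ and applies the reflection once. Your argument is the natural double-barrier extension of that derivation, and the genuinely new ingredient --- packaging the right-hand side as $\bE[\tilde{f}(X_T)]$ for a method-of-images extension $\tilde{f}$ that is antisymmetric under \emph{both} reflections $x \mapsto 2K-x$ and $x \mapsto 2(K+K')-x$, then killing $\bE[\tilde{f}(X_T)1_{\{\tau \leq T\}}]$ by a single conditioning at the exit time --- does exactly the required work: APCS at $K$ gives $\bE[\tilde{f}(X_s) \mid X_0 = K] = \bE[\tilde{f}(2K-X_s) \mid X_0 = K] = -\bE[\tilde{f}(X_s) \mid X_0 = K] = 0$ for every $s$, and likewise at $K+K'$, so the strong Markov property makes the conditional expectation vanish on $\{\tau \leq T\}$ whichever barrier is hit first, with no induction over successive crossings. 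Compared with the more common derivation that expands over the number of barrier crossings, your route buys a shorter cancellation at the cost of verifying the global dihedral antisymmetry of $\tilde{f}$, which you do correctly. Two minor points are worth recording explicitly: absolute convergence of the two series (hence the legitimacy of identifying them with $\bE[\tilde{f}(X_T)]$) follows from the boundedness of $f$ and the disjointness of the image cells; and the half-open cells make $\tilde{f}(2K-x) = -\tilde{f}(x)$ fail at the lattice points $K+mK'$ (at $x=K$ it would force $f(K)=0$), so your argument --- like the statement itself --- implicitly assumes $\bP(X_T = K+mK') = 0$ for all $m \in \bZ$, which is harmless for the diffusions considered but should be said.
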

Of the formula (\ref{double}), the left-hand-side is 
the price of a barrier option, and the 
right-hand-side is 
an infinite series of the prices of plain-vanilla options. 
It means that a double barrier option can be hedged by 
infinite plain-vanilla options. Practically, the series should be 
approximated by finite terms. 
In our numerical scheme, however, 
finite sum approximation is not necessary as we will explain later 
in Remark \ref{rem-45}.

We give a numerical scheme 
of a double barrier option under stochastic volatility model 
by using the symmetrized process which 
satisfies APCS at both $K$ and $K+K'$. 
The scheme is summarized as 
\begin{prop}\label{doublecor}
Set 
\begin{equation}\label{hatsigma}
\begin{split}
&\hat{\sigma}_{11} (x,v) \\
&=  
\sum_{n \in \mathbf{Z}}
\sigma_{11} (x-2nK',v)I_{[K + 2n K', K + 
(2n+1) K')}(x)\\
& \quad
-\sum_{n \in \mathbf{Z}}
\sigma_{11} (2K-(x- 2nK'),v)I_{ [K + (2n-1) K', 
K + 2n K')}(x),
\end{split}
\end{equation}
\begin{equation}\label{hatmu}
\begin{split}
&\hat{\mu}_{1} (x,v) \\
&=  
\sum_{n \in \mathbf{Z}}
\mu_{1} (x-2nK',v)I_{[K + 2n K', K + 
(2n+1) K')}(x)\\
& \quad
-\sum_{n \in \mathbf{Z}}
\mu_{1} (2K-(x- 2nK') ,v)I_{[ K + (2n-1) K', 
K + 2n K')}(x),
\end{split}
\end{equation}
and let $ \hat{X} $ be the unique (weak) solution to 
\begin{equation*}
d\hat{X}_t = \hat{\sigma}_{11} (\hat{X}_t,V_t) dW_t  
+ \hat{\mu}_1 (\hat{X}_t,V_t) \,dt,
\end{equation*}
where $ V $ is the solution to SDE (\ref{GSV1}). 
Then, it holds for any bounded Borel function $ f $ and $ t>0 $ that
\begin{equation}\label{STH7}
\begin{split}
& E [ f( X_t)1_{\{ \tau_{(K,K+K')} > t\}}]  \\
& \quad = 
\sum_{n \in \mathbf{Z} }
E[f(\hat{X}_t- 2nK')I_{ [K + 2n K', 
K + (2n+1) K')}(\hat{X}_t)] \\
& \qquad - \sum_{n \in \mathbf{Z} }
E [f (2K-(\hat{X}_t-2nK'))I_{ [K + (2n-1) K', 
K + 2n K')}(\hat{X}_t)].
\end{split}
\end{equation}
\end{prop}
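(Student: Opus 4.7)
The plan is to reduce to the double-barrier static hedging formula of Proposition \ref{propdouble}, applied to the doubly-symmetrized process $\hat{X}$, and then to identify $\hat{X}$ with $X$ up to the first exit from $(K, K+K')$. First I would verify that the tiled coefficients $\hat{\sigma}_{11}$ and $\hat{\mu}_1$ defined by (\ref{hatsigma})--(\ref{hatmu}) satisfy the symmetry conditions (\ref{SVsigma}) and (\ref{SVmu}) \emph{simultaneously} at both $K$ and $K+K'$. The definitions tile $\bR$ by alternating ``positive'' strips $[K+2nK', K+(2n+1)K')$ carrying a shifted copy of $\sigma_{11}$ and ``negative'' strips $[K+(2n-1)K', K+2nK')$ carrying a sign-flipped, reflected copy. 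A direct index check shows that the reflection $x \mapsto 2K - x$ maps the positive strip of index $n$ onto the negative strip of index $-n$, and that the composition of the tile's intrinsic reflection with the global reflection yields precisely the minus sign demanded by (\ref{SVsigma}); the analogous calculation using $x \mapsto 2(K+K') - x$ gives the symmetry at $K+K'$, and the same bookkeeping handles (\ref{SVmu}).

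With both symmetries in hand, Proposition \ref{prop-CLSV} applied at $K$ and again at $K+K'$ shows that $(\hat{X}, V)$ satisfies APCS at each of the two barriers. Proposition \ref{propdouble} then gives that $\bE[f(\hat{X}_t) I_{\{\hat{\tau}_{(K,K+K')} > t\}}]$ equals the right-hand side of (\ref{STH7}), where $\hat{\tau}_{(K,K+K')}$ is the first exit time of $\hat{X}$ from $(K,K+K')$.

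To close the argument I would match the two left-hand sides by identifying the laws of $X$ and $\hat{X}$ until they leave $(K,K+K')$. On the open interval $(K,K+K')$ --- the $n=0$ positive tile --- the formulas (\ref{hatsigma})--(\ref{hatmu}) reduce to $\hat{\sigma}_{11} = \sigma_{11}$ and $\hat{\mu}_1 = \mu_1$. Since $X_0 = \hat{X}_0 \in (K,K+K')$ and the same volatility process $V$ drives both SDEs, weak uniqueness of the localized SDE gives equality in law between the stopped processes $(X_{\cdot \wedge \tau_{(K,K+K')}}, V)$ and $(\hat{X}_{\cdot \wedge \hat{\tau}_{(K,K+K')}}, V)$, and in particular between the two exit times. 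The two left-hand sides therefore agree, and (\ref{STH7}) follows. The main obstacle is the first step --- the index bookkeeping that verifies the two-fold reflection symmetry of the tiled coefficients; everything after that is a direct invocation of Propositions \ref{prop-CLSV} and \ref{propdouble} together with the localization argument above.
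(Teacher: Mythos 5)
Your proposal is correct and follows exactly the route the paper intends: the paper's own proof is the single sentence ``This is an easy consequence of Proposition \ref{propdouble},'' and your argument supplies precisely the omitted details (the index bookkeeping showing the tiled coefficients are antisymmetric about both $K$ and $K+K'$, the application of Propositions \ref{prop-CLSV} and \ref{propdouble} to $\hat{X}$, and the localization identifying $X$ and $\hat{X}$ in law up to the exit time from $(K,K+K')$, where the coefficients coincide on the $n=0$ tile). No gaps; your reflection computations check out up to null boundary points.
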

\begin{proof}
This is an easy consequence of Proposition \ref{propdouble}. 
\end{proof}

\begin{remark}\label{rem-45}
The infinite series of the right hand side 
in (\ref{hatsigma}) and (\ref{hatmu}) is 
expressed by the following;
\begin{eqnarray*}
&&(\mbox{the right hand side of }(\ref{hatsigma}))\\
&&=\left\{
\begin{array}{ll}
\sigma(x- [ \frac{x-K}{K'}]K',v) 
&\mbox{if\ } [ \frac{x-K}{K'}] \equiv 0\mod 2, \\
-\sigma(2K -(x-([ \frac{x-K}{K'}]-1)K' ),v)
&\mbox{if\ }[ \frac{x-K}{K'}] \equiv 1 \mod 2, 
\end{array} 
\right.
\end{eqnarray*}
and 
\begin{eqnarray*}
&&(\mbox{the right hand side of }(\ref{hatmu}))\\
&&=\left\{
\begin{array}{ll}
\mu(x- [ \frac{x-K}{K'}]K',v) & \mbox{if\ }[ \frac{x-K}{K'}] \equiv 0\mod 2, \\
-\mu(2K -(x-([ \frac{x-K}{K'}]-1)K') ,v)
&\mbox{if\ }[ \frac{x-K}{K'}] \equiv 1 \mod 2.
\end{array} 
\right.
\end{eqnarray*}
Therefore the discretized process of 
$(\hat{X}, V) $ by Euler-Maruyama scheme can be 
simulated without approximating the infinite series by
finite sums. 
Similarly, we have 
\begin{eqnarray*}
\begin{split}
&(\mbox{the right hand side of }(\ref{STH7}))\\
&=\bE\left[f(\hat{X}_t[ \frac{[\hat{X}_t-K}{K'}]K',V_t)
I_{\{\frac{x-K}{K'}] \equiv 0\mod 2\}} 
\right. \\
& \qquad \left.
- f(2K -(\hat{X}_t-([ \frac{\hat{X}_t-K}{K'}]-1)K') ,V_t)
I_{\{[\frac{x-K}{K'}] \equiv 1 \mod 2\}}\right].
\end{split}
\end{eqnarray*}
Therefore Put-Call symmetry method is available 
for the pricing of a barrier option. 
\end{remark}
Table \ref{d-heston} and Table \ref{sim-double}
 below are numerical results 
of the pricing of a double barrier call option under 
Heston model and $\lambda$-SABR model, respectively. 
We take 
\begin{enumerate}
\item[Table] \ref{d-heston}: $X_0=100,\ V_0=0.03,\ S=95,\ K+K'=115,\ K=85,\ \theta = 0.03,\ r = 0.02,\ 
T=1,\ \kappa =1,\ \rho=-0.7,\ \nu = 0.03$,
\item[Table] \ref{sim-double}:$X_0=100,\ V_0= 0.3,\ 
S=95,\ K+K'=110,\ K=90,\ \theta = 0.3,\ 
r = 0.02,\ T=1,\ \beta=0.75, \lambda =1,\ 
\rho=-0.7,\ \nu = 0.3$,
\end{enumerate}
in the experiments. 
Benchmark is given by the same setting of Section \ref{cevsim}. 

\begin{table}[htbp]
\caption{
Heston model
\protect\\ \footnotesize{
$X_0=100,\ V_0=0.03,\ S=95,\ K+K'=115,\ K=85,\ \theta = 0.03,\ r = 0.02,\ 
T=1,\ \kappa =1,\ \rho=-0.7,\ \nu = 0.03, \ 
\mbox{benchmark\ of\ option \ price }  = 1.40319930$}}
\begin{center}
\begin{tabular}{c c|c c|c c}
\hline
 No. of & No. of &  &&EM & PCM \\ 
simulation trials & time steps &EM & PCM & 
error(\%) & error(\%) \\ \hline
$1000$ &$10 $ & $2.987 $ & $1.671 $ & $112.869 $ & $19.1 $\\
 $8000$ &$20 $ & $2.368 $ & $1.498 $ & $68.759 $ & $6.7 $\\
 $27000$ &$30 $ & $2.144 $ & $1.588 $ & $52.785 $ & $13.2 $\\
 $64000$ &$40 $ & $2.045 $ & $1.475 $ & $45.770 $ & $5.1 $\\
 $125000$ &$50 $ & $1.921 $ & $1.402 $ & $36.903 $ & $0.1 $\\
 $216000$ &$60 $ & $1.876 $ & $1.453 $ & $33.662 $ & $3.6 $\\
 $343000$ &$70 $ & $1.820 $ & $1.411 $ & $29.728 $ & $0.6 $\\
 $512000$ &$80 $ & $1.792 $ & $1.438 $ & $27.733 $ & $2.5 $\\
 $729000$ &$90 $ & $1.765 $ & $1.411 $ & $25.791 $ & $0.6 $\\
 $1000000$ &$100 $ & $1.744 $ & $1.416 $ & $24.281 $ & $0.9 $\\
\hline
\end{tabular}
\end{center}
\label{d-heston}
\caption{
$\lambda$-SABR model 
\protect\\ 
\footnotesize{
$X_0=100,\ V_0= 0.3,\ 
S=95,\ K+K'=110,\ K=90,\ \theta = 0.3,\ 
r = 0.02,\ T=1,\ \beta=0.75, \lambda =1,\ 
\rho=-0.7,\ \nu = 0.3,\ 
\mbox{benchmark\ of\ option \ price }  = 2.46950606$}}
\begin{center}
\begin{tabular}{c c|c c|c c}
\hline
 No. of & No. of &  &&EM & PCM \\ 
simulation trials & time steps &EM & PCM & 
error(\%) & error(\%) \\ \hline
$1000$ &$10 $ & $3.779 $ & $2.451 $ & $53.017 $ & $0.8 $\\
 $8000$ &$20 $ & $3.427 $ & $2.566 $ & $38.768 $ & $3.9 $\\
 $27000$ &$30 $ & $3.164 $ & $2.442 $ & $28.126 $ & $1.1 $\\
 $64000$ &$40 $ & $3.037 $ & $2.489 $ & $22.997 $ & $0.8 $\\
 $125000$ &$50 $ & $2.955 $ & $2.514 $ & $19.640 $ & $1.8 $\\
 $216000$ &$60 $ & $2.915 $ & $2.480 $ & $18.036 $ & $0.4 $\\
 $343000$ &$70 $ & $2.875 $ & $2.481 $ & $16.438 $ & $0.5 $\\
 $512000$ &$80 $ & $2.838 $ & $2.478 $ & $14.906 $ & $0.4 $\\
 $729000$ &$90 $ & $2.806 $ & $2.464 $ & $13.631 $ & $0.2 $\\
 $1000000$ &$100 $ & $2.779 $ & $2.465 $ & $12.540 $ & $0.2 $\\
\hline
\end{tabular}
\end{center}
\label{sim-double}
\end{table}
We still see that the put-call symmetry method beats 
the path-wise EM. 

\section{Concluding Remark}
The new scheme, which is based on the symmetrization of 
diffusion process, is, though not theoretically, 
experimentally proven 
to be more effective than the path-wise Euler-Maruyama
approximation scheme. 
The scheme is also applicable to stochastic volatility models including 
Heston's and SABR type. 

\bibliographystyle{econometrica}

\end{document}